\newtheorem{theorem}{Theorem}[section]
\newtheorem{lemma}[theorem]{Lemma}
\newtheorem{proposition}[theorem]{Proposition}
\newtheorem{corollary}[theorem]{Corollary}
\newtheorem{remark}[theorem]{Remark}
\newtheorem{assumption}[theorem]{Assumption}
\newenvironment{example small}[1][Example]{\begin{trivlist}
\item[\hskip \labelsep {\bfseries #1}]}{\end{trivlist}}
\newenvironment{remark small}[1][Remark]{\begin{trivlist}
\item[\hskip \labelsep {\bfseries #1}]}{\end{trivlist}}
\newcommand{\R}{\mathbb{R}}
\newcommand{\E}{\mathbb{E}}
\newcommand{\El}{\mathcal{E}}
\newcommand{\de}{\Delta}
\newcommand{\e}{\varepsilon}
\newcommand{\n}{\nabla}
\newcommand{\Lop}{\mathcal{L}}
\newcommand{\1}{\mathds{1}}
\newcommand{\K}{\mathcal{K}}
\newcommand{\Lopgen}{\mathcal{L}}
\newcommand{\U}{U}
\newcommand{\er}{v_{\mathrm{min}}}
\newcommand{\ef}{v_{\mathrm{max}}}
\newcommand{\ds}{\displaystyle}
\newcommand{\vp}{\varphi}
\newcommand{\UO}{U_{\rm org}}
\newcommand{\UN}{U_{\rm new}}
\renewcommand{\leq}{\leqslant}
\renewcommand{\geq}{\geqslant}
\newcommand{\dps}{\displaystyle}
\newcommand{\dt}{{\Delta t}}
\newcommand{\rme}{\mathrm{e}}
\newcommand{\sS}{\mathscr{S}}
\newcommand{\Er}{e_{\mathrm{min}}}
\newcommand{\Ef}{e_{\mathrm{max}}}
\newcommand{\Li}{\mathcal{K}}
\newcommand{\Id}{\mathrm{Id}}
\DeclareMathAlphabet{\mathpzc}{OT1}{pzc}{m}{it}
 \newcommand{\cL}{\mathcal{L}}
\newcommand{\cH}{\mathcal{H}}
\newcommand{\eps}{\varepsilon}
\renewcommand{\leq}{\leqslant}
\renewcommand{\geq}{\geqslant}
\newcommand{\cLham}{\mathcal{L}_\mathrm{ham}}
\newcommand{\cLovd}{\mathcal{L}_\mathrm{ovd}}
\newcommand{\cLFD}{\mathcal{L}_\mathrm{FD}}
\newcommand{\cM}{\mathcal{M}}
\begin{document}

\title{Langevin dynamics with general kinetic energies}			

\author{Gabriel Stoltz$^1$ and Zofia Trstanova$^2$ \\
 {\small $^{1}$ Universit\'e Paris-Est, CERMICS (ENPC), INRIA, F-77455 Marne-la-Vall\'ee, France} \\
{\small $^2$ University of Edinburgh, United Kingdom
}
}		

\maketitle

\abstract{  
We study Langevin dynamics with a kinetic energy different from the standard, quadratic one in order to accelerate the sampling of Boltzmann--Gibbs distributions. In particular, this kinetic energy can be non-globally Lipschitz, which raises issues for the stability of discretizations of the associated Langevin dynamics. We first prove the exponential convergence of the law of the continuous process to the Boltzmann--Gibbs measure by a hypocoercive approach, and characterize the asymptotic variance of empirical averages over trajectories. We next develop numerical schemes which are stable and of weak order two, by considering splitting strategies where the discretizations of the fluctuation/dissipation are corrected by a Metropolis procedure. We use the newly developped schemes for two applications: optimizing the shape of the kinetic energy for the so-called adaptively restrained Langevin dynamics (which considers perturbations of standard quadratic kinetic energies vanishing around the origin); and reducing the metastability of some toy models using non-globally Lipschitz kinetic energies.
}

\section{Introduction}

In statistical physics, the macroscopic information of interest for the systems under consideration can be inferred from averages over microscopic configurations distributed according to probability measures~$\mu$ characterizing the thermodynamic state of the system~\cite{Balian,Tuckerman}. Due to the high dimensionality of the system (which is proportional to the number of particles), these configurations are most often sampled using trajectories of stochastic differential equations or Markov chains ergodic for the probability measure~$\mu$; see for instance~\cite{LMbook,actaLelievre2016}. 

We focus here on a typical choice for~$\mu$, namely the Boltzmann--Gibbs measure, which describes a system at constant temperature. One popular stochastic process allowing to sample this measure is the Langevin dynamics. We denote the configuration of the system by $(q,p) \in \El$, where $q\in\mathcal{D}^d$ are the positions of the particles in the system (with $\mathcal{D} = \mathbb{R}$ or $\mathcal{D} = \mathbb{R}/\mathbb{Z}$ for systems with periodic boundary conditions), and $p \in \mathbb{R}^d$ the associated momenta. Therefore, $\mathcal{E} = \mathcal{D}^d \times \mathbb{R}^d$. For general separable Hamiltonian energies of the form $H(q,p) = V(q) + U(p)$, the Langevin dynamics reads
\begin{equation}
  \label{modified Langevin}
  \left\{
  \begin{aligned}
    dq_t & = \n U(p_t) \, dt, \\
    dp_t & = -\n V(q_t) \, dt - \gamma \n U(p_t) \, dt + \sqrt{\frac{2\gamma}{\beta}} \, dW_t,
  \end{aligned}
\right.
\end{equation}
where $dW_t$ is a standard $d$-dimensional Wiener process, $\beta > 0$ is proportional to the inverse temperature and $\gamma>0$ is the friction constant. The corresponding Boltzmann--Gibbs (or canonical) measure is
\begin{equation}
\label{eq: invariant measure modified Langevin}
\mu(dq \,dp)=Z^{-1}_{\mu}\mathrm{e}^{-\beta H(q,p)}\,dp\,dq, \qquad Z_{\mu}=\int_{\El}\mathrm{e}^{-\beta H(q,p)}\,dp\,dq.
\end{equation} 
Averages of an observable $\vp$ with respect to this distribution are approximated by ergodic means as 
\begin{equation}
\lim_{t \rightarrow\infty}\widehat{\vp}_t=\E_{\mu}(\vp)\quad \text{a.s.}, \qquad \widehat{\vp}_t:=\frac{1}{t}\int_0^t\vp(q_s,p_s) \, ds.
\label{eq: ergodic averages}
\end{equation}

In practice, the Langevin dynamics~\eqref{modified Langevin} cannot be analytically integrated. Its solution is therefore approximated with a numerical scheme. The numerical analysis of such discretization schemes is by now well-understood when $U$ is the standard quadratic kinetic energy. We refer for instance to~\cite{mattingly2002ergodicity,Kopec} for implicit schemes suited for dynamics in unbounded spaces, and to~\cite{bou2010long,bou2009pathwise,Matthews,abdulle2015long} for mathematical studies of the properties of splitting schemes. 

One important limitation of the estimators $\widehat{\vp}_t$ in~\eqref{eq: ergodic averages} are their possibly large statistical errors. Under certain assumptions on~$U,V$ (see \emph{e.g.}~\cite{actaLelievre2016,trstanova2015errorAnalysisOfMLD} and references therein), it can be shown that a central limit theorem holds true, so that $\sqrt{t}[\widehat{\vp}_t-\E_{\mu}(\vp)]$ converges in law to a centered Gaussian distribution of variance $\sigma_\vp^2$. The asymptotic variance $\sigma_\vp^2$ may be large due to the metastability of the Langevin process, which occurs as soon as the probability measure~$\mu$ is multimodal (\textit{i.e.} it has modes of large probabilities separated by low-probability regions). Since the statistical error scales as $\sigma_\vp/\sqrt{t}$, there are three ways to decrease it \emph{at fixed computational time}: 
\begin{enumerate}[(i)]
\item decrease the value of the asymptotic variance~$\sigma_\vp$ by using variance reduction techniques (stratification, importance sampling, control variates, etc; see for instance the review in~\cite[Section~3.4]{actaLelievre2016});
\item increase the timestep $\dt$ in order to increase the simulated physical time $N_{\rm iter} \dt$ at fixed number of iterations. The most important limitations on~$\dt$ are related to the stability of the schemes under consideration;
\item decrease the computation cost of a single step in order to increase the number of iterations~$N_{\rm iter}$. 
\end{enumerate}

In this work, we consider the mathematical analysis and discretization of modified Langevin dynamics which improve the sampling of the Boltzmann--Gibbs distribution by introducing a kinetic energy function $U$ more general than the standard quadratic one. The stability of the numerical schemes is a major concern here, but we also discuss some importance sampling strategy in Section~\ref{sec:num_non_glob_Lip}. We have in fact two situations in mind:
\begin{enumerate}[(a)]
\item adaptively restrained Langevin dynamics~\cite{PRL-ARPS}, where the kinetic energy vanishes for small momenta, while it agrees with the standard kinetic energy for large momenta. The interest of this dynamics is that slow particles are frozen. The computational gain follows from the fact that the interactions between frozen particles need not be updated. A mathematical analysis of the asymptotic variance for this method is presented in~\cite{trstanova2015errorAnalysisOfMLD}, while the algorithmic speed-up, which allows to decrease the cost of a single iteration, is made precise in~\cite{trstanova2015speedUpARPS};
\item Langevin dynamics with kinetic energies growing more than quadratically at infinity, in an attempt to reduce metastability. Recall indeed that the marginal $\nu(dq) = Z_\nu^{-1} \mathrm{e}^{-\beta V(q)} \, dq$ of the canonical measure in the position variables is the crucial part to sample. The marginal distribution of~$\mu$ in the variable~$q$ is always~$\nu$, whatever the choice of the kinetic energy~$U$. The extra freedom provided by~$U$ can be used in order to reduce the metastability of the dynamics and hence the variance when the aim is to sample~$\nu$. 
\end{enumerate}

One of the main issues with the situations we consider is the stability of discretized schemes. Several works indicate that explicit discretizations of Langevin-type dynamics with non-globally Lipschitz force fields are often unstable (in the sense that the corresponding Markov chains do not admit invariant measures), see \emph{e.g.} \cite{mattingly2002ergodicity}. We face such situations here, even for compact position spaces, when $\nabla U$ is not globally Lipschitz. For adaptively restrained Langevin dynamics, the difficulties arise from the possibly abrupt transition from the region where the kinetic energy vanishes to the region where it coincides with the standard one. As for the stabilization of the Euler-Maruyama discretization of overdamped Langevin dynamics in~\cite{RT96}, we suggest to use a Metropolis acceptance/rejection step~\cite{MRRTT53,Hastings70} in order to ensure the stability of the methods under consideration. Such a stabilization leads to schemes which can be seen as one step Hybrid Monte Carlo (HMC)\footnote{Also called "Hamiltonian Monte-Carlo" in the statistics community.} algorithms~\cite{duane1987hybrid} with partial refreshment of the momenta, studied for instance in~\cite{bou2009pathwise} for the standard kinetic energy. Here, in order to obtain a weakly consistent method of fractional order~3/2 (it is not trivial to go beyond order~1 schemes when the fluctuation/dissipation cannot be analytically integrated), we rely on the Metropolis schemes developped for overdamped Langevin dynamics in~\cite{fathi2015improving}.

\medskip

This article is organized as follows. In Section~\ref{section mld}, we present the modified Langevin dynamics, give an exponential convergence result for the law of the process and make precise the asymptotic variance of empirical averages over a trajectory. We next discuss in Section~\ref{section ghmc} the discretization of the dynamics, and introduce in particular a generalized Hybrid Monte Carlo scheme of weak order~3/2. We then turn to numerical results relying on the stability properties of the Metropolized scheme. We first propose, for the adaptively restrained Langevin dynamics, a better kinetic energy function than the one originally suggested in~\cite{PRL-ARPS} (see Section~\ref{section AR-Langevin dynamics}); and finally demonstrate on a simple example how the choice of non-quadratic kinetic energies can dramatically improve the sampling efficiency (see Section~\ref{sec:num_non_glob_Lip}). The proofs of some technical results needed in the analysis of Section~\ref{section mld} are gathered in Appendix~\ref{sec:technical_results}. 

\section{Convergence of the modified Langevin dynamics}
\label{section mld}

We consider in all this work kinetic energies~$U$ and potentials~$V$ which satisfy the following conditions.

\begin{assumption}
The functions $U,V$ are smooth functions growing at most polynomially at infinity and such that
\[
\int_{\mathbb{R}^d} \mathrm{e}^{-\beta U} < +\infty, \qquad \int_{\mathcal{D}^d} \mathrm{e}^{-\beta V} < +\infty.
\]
\end{assumption}

We denote the generator of the dynamics~\eqref{modified Langevin} by 
\begin{equation}
\Lopgen = \Lop_{\rm Ham}+\gamma\Lop_{\rm{FD}}, \qquad \Lop_{\rm Ham} = \n \U \cdot \n_q - \n V\cdot \n_p, \qquad \Lop_{\rm{FD}} = -\n \U\cdot \n_p + \frac{1}{\beta}\de_p.
\label{eq: generator modified Langevin}
\end{equation}
A simple computation shows that~\eqref{modified Langevin} leaves the measure~\eqref{eq: invariant measure modified Langevin} invariant since, for all $C^\infty$ functions~$\vp$ with compact support,
\[
\int_{\El}\Lopgen\vp \ d\mu=0.
\]
We refer for instance to the review in~\cite{actaLelievre2016} for convergence results for the Langevin dynamics associated with the standard kinetic energy 
\begin{equation}
\label{eq:Ustd}
U_{\rm std}(p) = \frac12 p^T M^{-1}p,
\end{equation}
where $M$ is a positive mass matrix (typically a diagonal matrix, where the entries are the inverses of the masses of the particles in the system). These convergence results are stated either in terms of ergodic averages (Law of Large Numbers and Central Limit Theorem) or in terms of the law of the process at time~$t$. The aim of this section is to extend these results to more general kinetic energies. We start by providing an exponential convergence result for the law of the process in Section~\ref{sec:cv_law}, before studying in more detail the asymptotic variance of time averages in Section~\ref{sec:CLT}.

\subsection{Convergence of the law}
\label{sec:cv_law}

An extension of the hypocoercive approach of~\cite{DMS09,DMS15} allows to state exponential convergence results for the law of the process~\eqref{modified Langevin} in the Hilbert space~$L^2(\mu)$, for quite general kinetic energies, possibly non globally Lipschitz - in any case more general than the ones we considered in our previous work~\cite{trstanova2015errorAnalysisOfMLD}. This approach also turns out to be more quantitative since it provides upper bounds on the convergence rate which can be made explicit in terms of the friction~$\gamma$ and possibly other parameters of the dynamics (see for instance~\cite{RS17,IOS17} for similar results). Moreover, the result holds both for bounded and unbounded position spaces (contrarily to~\cite{trstanova2015errorAnalysisOfMLD} where the analysis is performed only for bounded position spaces).

In the following we consider all operators as defined on the Hilbert space $L^2(\mu)$ unless explicitly mentioned otherwise. The adjoint of a closed operator $T$ on $L^2(\mu)$ is denoted by $T^*$. The scalar product and norm on $L^2(\mu)$ are respectively denoted by $\langle \cdot, \cdot \rangle_{L^2(\mu)}$ and $\| \cdot \|_{L^2(\mu)}$. The norm of a bounded operator~$T$ on $L^2(\mu)$ is
\[
\|T\| = \sup_{g \in L^2(\mu) \backslash \{0\}}\frac{\|Tg\|_{L^2(\mu)}}{\|g\|_{L^2(\mu)}}.
\]
In this framework, the Fokker--Planck equation associated with~\eqref{modified Langevin} reads
\[
\partial_t f = \cL^* f,
\]
where $\psi(t) = f(t)\mu$ is the law of~\eqref{modified Langevin} at time~$t$, and 
\[
\cL^* = - \cLham + \gamma \cLFD.
\] 
Since \[ \int_\mathcal{E} \psi(0) = \int_\mathcal{E} f(0) \, d\mu = 1, \] it is expected that $f(t) = \rme^{t \cL^*} f(0)$ converges to the constant function~$\mathbf{1}$ as $t \to +\infty$. In order to state a precise convergence result, we need some conditions on both $U$ and~$V$, and on the marginal measures of $\mu$ in the position and momentum variables. These marginal probability measures are respectively
\begin{equation}
  \label{eq:marginal measures}
  \nu(dq) = Z^{-1}_\nu \rme^{-\beta V(q)} \, d q, 
  \qquad  
  \kappa(dp) = Z_\kappa^{-1} \rme^{-\beta U(p)} \, d p.
\end{equation}
Moreover, for any $\alpha = (\alpha_1,\dots,\alpha_d)$, we denote by $\partial_p^\alpha = \partial_{p_1}^{\alpha_1} \dots \partial_{p_d}^{\alpha_d}$ and $|\alpha| = \alpha_1 + \dots + \alpha_d$.

\begin{assumption}
\label{ass:UV}
The marginal measures $\nu$ and $\kappa$ satisfy Poincar\'e inequalities: There exist $K_\nu,K_\kappa > 0$ such that, for any $(\phi,\varphi) \in L^2(\nu) \times L^2(\kappa)$,  
\begin{equation}
\label{eq:Poincare}
\left\| \phi - \int_{\mathcal{D}^d} \phi \, d\nu \right\|_{L^2(\nu)} \leq \frac{1}{K_\nu} \| \nabla_q \phi  \|_{L^2(\nu)},
\qquad
\left\| \varphi - \int_{\R^d} \varphi \, d\kappa \right\|_{L^2(\kappa)} \leq \frac{1}{K_\kappa} \| \nabla_p \varphi  \|_{L^2(\kappa)}.
\end{equation}
We also assume the following 
\begin{enumerate}[(i)]
\item there exist $c_1 > 0$, $c_2 \in [0,1)$ and $c_3 > 0$ such that $V$ satisfies
  \begin{equation}
    \label{eq:regularization condition}
    \Delta V \leq c_1 + \frac {c_2} 2 | \nabla V |^2, \quad |\nabla^2 V | \leq c_3 \left( 1 + | \nabla V | \right);
  \end{equation}
\item the kinetic energy $U$ is such that $\partial_p^\alpha U$ belongs to $L^2(\kappa)$ for any $|\alpha| \leq 3$, and $(\partial^{\alpha} U) (\partial^{\alpha'} U)$ is in $L^2(\kappa)$ for $|\alpha| \leq 2$ and $|\alpha'| = 1$.
\end{enumerate}
\end{assumption}

Recall that there are various criteria ensuring that the Poincar\'e inequalities~\eqref{eq:Poincare} hold. One example is the following condition~\cite{BBCG08}: there exist $a_\nu,a_\kappa \in (0,1)$ such that
\[
\liminf_{|q| \to \infty} a_\nu \beta | \nabla V (q)|^2 - \Delta V(q) > 0,
\qquad
\liminf_{|p| \to \infty} a_\kappa \beta | \nabla U (p)|^2 - \Delta U(p) > 0.
\]
It is easy to check that the conditions in Assumption~\ref{ass:UV} are satisfied for $U$ and $V$ which asymptotically behave at infinity as $|q|^a$ and $|p|^b$, with $a,b>1$. Note also that the kinetic energy~$U$ is allowed to be constant on open sets, so that the generator~$\cL$ or its adjoint~$\cL^*$ are not necessarily hypoelliptic. Despite this possible lack of hypoellipticity, the following convergence result holds. In order to state it, we introduce the following subspace of $L^2(\mu)$: \[ L^2_1(\mu) = \left\{ f \in L^2(\mu) \, \left| \int_\mathcal{E} f \, d\mu = 1 \right. \right\}. \]

\begin{theorem}
\label{theorem DMS}
Suppose that Assumption~\ref{ass:UV} holds. Then, there exist $C,\lambda > 0$ such that, for any $\gamma \in (0,+\infty)$,
\[
\forall t \geq 0, \quad \forall f \in L^2_1(\mu), \qquad
\left\| \rme^{t \cL^*} f - \mathbf{1} \right\|_{L^2(\mu)} \leq C \rme^{-\lambda \min(\gamma,\gamma^{-1})t} \| f - \mathbf{1} \|_{L^2(\mu)}.
\]
\end{theorem}

Note that, as for standard kinetic energies, we find an upper bound of the form $\min(\gamma,\gamma^{-1})$ for the convergence rate. Let us mention that, unfortunately, we were not able to extract a meaningful dependence of~$\lambda$ on~$U$, see Remark~\ref{rmk:dep_U}. Such a result would be extremely useful since it would provide a theoretical guide for designing ``optimal'' kinetic energies.

Let us briefly sketch the proof of Theorem~\ref{theorem DMS}, which very closely follows the proof presented in~\cite[Appendix~A]{RS17} apart from some technical results requiring a dedicated treatment postponed to Appendix~\ref{sec:technical_results}. Introduce the projection $\Pi : L^2(\mu) \to L^2(\nu)$ defined as
\[
(\Pi g)(q) = \langle g(q,\cdot), \mathbf{1}\rangle_{L^2(\kappa)} = \int_{\R^d} g(q,p)\, \kappa(dp),
\]
as well as the operator 
\[
A = -\left(1 - \Pi \cLham^2 \Pi \right)^{-1} \Pi \cLham.
\]
In fact, $A$ is bounded with $\|A\| \leq 1/2$; see Lemma~\ref{lem:A} for further properties of this operator. We next consider the modified squared norm on~$L^2(\mu)$:
\begin{equation}
\label{eq:entropy_functional}
\cH(g) = \frac12 \| g \|_{L^2(\mu)}^2 + \eps \langle Ag, g\rangle_{L^2(\mu)},
\end{equation}
which is equivalent to the standard norm for $\eps \in (0,1)$; and denote by $\langle\langle\cdot,\cdot\rangle\rangle$ the scalar product associated with~$\cH$ by polarization. The key point is the following coercivity property, formulated for functions in~$\mathscr{C}$, the space of real valued $C^\infty$ functions with compact support (see the proof in Appendix~\ref{sec:technical_results}). In order to state it, we introduce the following subspace of $L^2(\mu)$: \[ L^2_0(\mu) = \left\{ f \in L^2(\mu) \, \left| \int_\mathcal{E} f \, d\mu = 0 \right. \right\}. \]

\begin{proposition}
  \label{prop:coercivity_scrD}
  There exists $\overline{\varepsilon} \in (0,1)$ and $\lambda>0$, such that, by considering $\varepsilon = \overline{\varepsilon} \min(\gamma,\gamma^{-1})$ in~\eqref{eq:entropy_functional},
  \begin{equation}
    \label{eq:coercivity double angle}
    \forall g \in \mathscr{C} \cap L^2_0(\mu), \qquad \langle \langle -\cL^* g,g \rangle \rangle \geq \widetilde{\lambda}_\gamma \|g\|^2,
  \end{equation}
  with $\widetilde{\lambda}_\gamma \geq \lambda \min(\gamma,\gamma^{-1})$.
\end{proposition}

This coercivity property and a Gronwall inequality then allow to conclude to the exponential convergence to~0 of $\cH[\rme^{t \cL^*} (f - \mathbf{1})]$, from which Theorem~\ref{theorem DMS} follows by the norm equivalence of $\sqrt{\cH}$ and $\| \cdot \|_{L^2(\mu)}$.

\subsection{Asymptotic variance of empirical averages}
\label{sec:CLT}

We consider in this section the asymptotic behavior of the ergodic averages~\eqref{eq: ergodic averages}. The first result is an ergodicity property, which holds under the following assumption.

\begin{assumption}
  \label{ass:nabla2_U}
  The generator $\mathcal{L}$ is hypoelliptic. 
\end{assumption}

\begin{proposition}
Suppose that Assumption~\ref{ass:nabla2_U} holds. Then, for any bounded measurable function~$\varphi$, it holds
\[
\widehat{\varphi}_t \xrightarrow[t \to +\infty]{} \int_{\mathcal{E}} \varphi \, d\mu \qquad \mathrm{a.s.}
\]
\end{proposition}

The result is a consequence of~\cite{Kli87} since an invariant probability measure (namely~$\mu$) is known. A sufficient condition for $\mathcal{L}$ to be hypoelliptic is that the matrix $\nabla^2 U(p) \in \R^{d \times d}$ is definite positive for all $p \in \R^d$, see~\cite[Section~3.1]{trstanova2015errorAnalysisOfMLD}. Weaker conditions involving non-vanishing higher order derivatives could also be stated. Let us also mention that it is possible to remove the assumption that $\mathcal{L}$ is hypoelliptic as done in~\cite{trstanova2015errorAnalysisOfMLD} (where the derivatives of~$U$ vanish on a set of positive measure), but in this case only sufficiently small perturbations of standard quadratic kinetic energies can be considered, and the position space should be compact. 

Once the ergodicity of the dynamics is ensured, it is possible to characterize the asymptotic variance as a corollary of the convergence result provided by Theorem~\ref{theorem DMS}.

\begin{theorem}
  Suppose that Assumptions~\ref{ass:UV} and~\ref{ass:nabla2_U} hold. Then, for any $\varphi \in L^2(\mu)$,
  \[
  \lim_{t \to +\infty} t \mathbb{E}\left[ \left(\widehat{\varphi}_t - \int_{\mathcal{E}} \varphi \, d\mu\right)^2 \right] = \sigma^2_\varphi,
  \qquad 
  \sigma^2_\varphi = 2 \int_\mathcal{E} \left[-\cL^{-1}\left(\varphi - \int_{\mathcal{E}} \varphi \, d\mu\right)\right] \varphi \, d\mu, 
  \]
  where the expectation is over initial conditions $(q_0,p_0) \sim \mu$ and for all realizations of the Brownian motion in~\eqref{modified Langevin}.
\end{theorem}

The proof of this result is a simple consequence of a dominated convergence argument and the exponential convergence to~0 of the semigroup $\rme^{t \cL}$ on $L^2_0(\mu)$, which has the same operator norm as its adjoint $\rme^{t \cL^*}$. In particular, $\cL$ is invertible on $L^2_0(\mu)$; see~\cite[Section~3.1.2]{actaLelievre2016} for the complete argument. Let us also note that, using the results of~\cite{bhattacharya1982}, it is possible to state a Central Limit Theorem, even for initial conditions not distributed according to the canonical measure. 

\section{Discretization of the modified Langevin dynamics}
\label{section ghmc}

For a given timestep $\dt > 0$, numerical schemes approximate the solution $(q_{n\dt},p_{n\dt})$ of the Langevin dynamics~\eqref{modified Langevin} by $(q^n,p^n)$. The sequence $(q^n,p^n)_{n \geq 0}$ usually is a Markov chain. One appealing strategy to construct numerical schemes for Langevin dynamics is to resort to a splitting scheme between the Hamiltonian part of the dynamics (typically integrated with a Verlet scheme~\cite{Verlet}) and the fluctuation/dissipation dynamics on the momenta. The corresponding dynamics 
\begin{equation}
  \label{eq:elementary_FD}
  dp_t = -\gamma \n U(p_t) \, dt + \sqrt{\frac{2\gamma}{\beta}} \, dW_t,
\end{equation}
with generator $\gamma \Lop_{\rm FD}$, cannot be analytically integrated, except for very specific kinetic energies such as $U_{\rm std}$ defined in~\eqref{eq:Ustd}. A simple extension of the results of~\cite{Matthews} shows that splitting schemes (either Lie or Strang) based on a weakly second order consistent discretization of~\eqref{eq:elementary_FD} and a Verlet scheme for the Hamiltonian part are globally weakly consistent, of weak order~1 for Lie-based splittings and of weak order~2 for Strang based splittings. Moreover, in the case when the kinetic energy is a perturbation of the standard kinetic energy, in the sense that
\begin{equation}
  \label{eq:pert_U_std}
  \| \n U - \n U_{\rm std} \|_{L^\infty} < +\infty,
\end{equation}
it can be shown that the numerical schemes admit a unique invariant probability measure~$\mu_\dt$. Finally, it is possible to prove exponential convergence in some weighted $L^\infty$ spaces, with rates which are uniform in the timestep $\dt$ and depend only on the physically elapsed time. This allows also to state error estimates on the invariant measure~$\mu_\dt$ and on integrated correlation functions. Such results are obtained by adapting the proofs of the corresponding statements in~\cite{Matthews}, upon replacing $\n U_{\rm std}(p) = M^{-1}p$ with $\n U(p) = M^{-1}p + Z(p)$ where $Z$ is uniformly bounded (see~\cite{PhD}). 

On the other hand, when the condition~\eqref{eq:pert_U_std} is not satisfied, it may not be possible to prove the existence of a unique invariant measure for the splitting schemes. The main obstruction is that the Markov chain corresponding to the discretization of the elementary fluctuation/dissipation dynamics~\eqref{eq:elementary_FD} may itself be transient. This is the case for instance for non-globally Lipschitz force fields $\n U$ and a Euler-Maruyama discretization~\cite{RT96}. This observation motivates resorting to a Metropolis correction in order to ensure the existence of an invariant probability distribution.

We present in this section a generalized Hybrid Monte-Carlo (GHMC) scheme to discretize the Langevin dynamics with non-quadratic kinetic energies. For an introduction to HMC and some of its generalizations, we refer the reader to, for instance~\cite[Section 2.2.3]{lelievre2010free} and~\cite{BS18}. In essence, HMC is a Metropolis-Hastings method based on a proposal generated by the integration of the deterministic Hamiltonian dynamics. The proposal is then accepted or rejected according to a Metropolis rule. The rejection of the proposal occurs due to discretization errors. The efficiency of the method is therefore a trade-off between larger simulated physical times (which calls for larger timesteps) and not too large rejection rates (which places an upper limit on possible timesteps). 

We metropolize the Langevin dynamics with a general kinetic energy in two steps: first, we metropolize the Hamiltonian part as in the standard single-step HMC method (see Section~\ref{Metropolization of the Hamiltonian part}); in a second step, we add a weakly consistent discretization of the elementary fluctuation/dissipation stabilized by a Metropolis procedure (see Section~\ref{sec:disc_FD}). The complete algorithm is summarized in Section~\ref{Complete Generalized Hybrid Monte-Carlo scheme}. Let us already emphasize that the canonical measure is by construction an invariant measure for the numerical scheme. On the other hand, dynamical properties such as correlations in time are in general corrupted by the Metropolization procedure, which incurs stagnations due to rejected moves and may lead to large biases. This issue can be taken care of by constructing schemes with sufficiently high weak order, relying on standard weak type error estimates at finite times~\cite{MT04}.

In order to state rigorous results, we work with functions growing at most polynomially. More precisely, introducing the weight function $\Li_\alpha(q,p) = 1 + |q|^\alpha + |p|^\alpha$ for $\alpha \in \mathbb{N}$, we consider the following spaces of functions growing at most as $\Li_\alpha$ at infinity:
\[
L^\infty_{\Li_\alpha} = \left\{ f \textrm{ measurable}, \quad \left\|f \right\|_{L^\infty_{\Li_\alpha}} = \left\|\frac{f}{\Li_\alpha}\right\|_{L^\infty} < +\infty \right\}.
\]
In order to write more concise statements, we simply say that a family of functions $f_\dt$ grows at most polynomially in $(q,p)$ uniformly in~$\dt$ when there exist $K,\alpha,\dt^*>0$ such that 
\begin{equation}
\label{grows at most polynomially}
\sup_{0 < \dt \leq \dt^*} \left\|f_{\de t}\right\|_{L^\infty_{\Li_\alpha}} \leq K. 
\end{equation}
We finally define the vector space $\sS$ of smooth functions which, together with all their derivatives, grow at most polynomially.

\subsection{Metropolization of the Hamiltonian part}
\label{Metropolization of the Hamiltonian part}

Let us describe the one-step HMC method we use to discretize the Hamiltonian part of the dynamics:
\begin{equation}
  \label{eq:Hamiltonian_dyn}
  \left\{
  \begin{aligned}
    dq_t & = \n U(p_t) \, dt, \\
    dp_t & = -\n V(q_t) \, dt.
  \end{aligned}
\right.
\end{equation}
In order to ensure the reversibility of dynamics, we need to assume that the kinetic energy is symmetric: $U(p)=U(-p)$.
Starting from a configuration $(q^n, p^n)\in \El$, a new configuration $(\widetilde{q}^{n+1}, \widetilde{p}^{n+1}) = \Phi_\dt(q^n,p^n)\in \El$ is proposed using the Verlet scheme
\begin{equation}
\displaystyle
\left\{
\begin{aligned}
p^{n+1/2} &=  p^{n}-\n V(q^n)\frac{\de t}{2}, \\
\widetilde{q}^{n+1}&=q^n+\n U(p^{n+1/2})\de t, \\
\widetilde{p}^{n+1} &=  p^{n+1/2}-\n V(\widetilde{q}^{n+1})\frac{\de t}{2}. \\
\end{aligned}
\right.
\label{Verlet}
\end{equation}
The proposal is then accepted with probability
\begin{equation}
A_{\de t}^{\rm Ham}\left(q^n, p^n\right)=\min \left(1,\exp\left(-\beta\Big[H\left(\Phi_\dt(q^n,p^n)\right)- H\left(q^n, p^n \right)\Big]\right)\right).
\label{rejection rate hmc}
\end{equation}
If the proposal is rejected, a momentum reversal is performed and the next configuration is set to $(q^{n+1}, p^{n+1})=(q^n, -p^n )$ (see the discussion in~\cite[Section~2.2.3]{lelievre2010free} for a motivation of the momentum reversal). In summary, the new configuration is 
\begin{equation}
\begin{aligned}
\left(q^{n+1},p^{n+1}\right) &= \Psi^{\rm Ham}_\dt(q^n,p^n,\mathcal{U}^n) \\
& = \1_{\left\{\mathcal{U}^n\leq A^{\rm Ham}_{\de t} \left(q^n, p^n\right)\right\}}\Phi_{\de t}\left(q^n, p^n\right)+\1_{\left\{\mathcal{U}^n > A^{\rm Ham}_{\de t} \left(q^n, p^n\right)\right\}}\left(q^n, -p^n\right),
\end{aligned}
\label{eq: hmc}
\end{equation}
where $(\mathcal{U}^n)_{n \geq 0}$ is a sequence of independent and identically distributed (i.i.d.) random variables uniformly distributed in~$[0,1]$. A simple proof shows that the canonical measure~$\mu$ is invariant by the scheme~\eqref{eq: hmc}. The corresponding Markov chain is however of course not ergodic with respect to~$\mu$ since momenta are not resampled or randomly modified at this stage (this will be done by the discretization of the fluctuation/dissipation, see Section~\ref{Complete Generalized Hybrid Monte-Carlo scheme} for the complete GHMC scheme).

Without any discretization error (\textit{i.e.} if the Hamiltonian dynamics was exactly integrated, so that the energy would be constant), the proposal would always be accepted. Since the Verlet scheme is of order~2, we expect the energy difference $H\left(\Phi_\dt(q^n,p^n)\right)- H\left(q^n, p^n \right)$ to be of order~$\dt^3$. The following lemma makes this intuition rigorous and quantifies the rejection rate $1-A_\dt^{\rm Ham}$ in terms of the timestep $\dt$ and derivatives of the potential and kinetic energy functions.

\begin{lemma}
  \label{lemma rejection rate hmc}
  Assume that $U,V \in \sS$ { and $U$ is symmetric}. Then there exist $K,\dt^*,\alpha > 0$ such that the rejection rate of the one-step HMC scheme~\eqref{eq: hmc} admits the following expansion: for any $\dt \in (0,\dt^*]$,
  \begin{equation}
    0 \leq 1-A^{\rm Ham}_{\de t} = \de t^3\xi_++\de t^4r_{\de t}\,,
    \label{eq: rej rate HMC}
  \end{equation}
  with $\sup_{0 < \dt \leq \dt^*} \|r_{\de t}\|_{L^\infty_{\Li_\alpha}} \leq K$. Moreover, the leading order of the rejection rate is given by~$\xi_{+}:=\max\left(0,\xi\right)$ with
  \begin{equation}
    \label{eq: lead order rejection rate}
    \xi = -\Lop_{\rm Ham}H_2, 
      \qquad 
      H_2(q,p) = \frac{1}{12}\left[-\frac{1}{2}\n V(q)^T\n^2 U(p)\n V(q)+\n U(p)^T\n^2 V(q)\n U(p)\right].
  \end{equation}
\end{lemma}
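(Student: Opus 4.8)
The plan is to perform a Taylor expansion of the energy difference $H(\Phi_\dt(q,p)) - H(q,p)$ in powers of $\dt$, using the fact that the Verlet scheme is a second-order splitting integrator. First I would write $\Phi_\dt = \phi^{B}_{\dt/2} \circ \phi^{A}_{\dt} \circ \phi^{B}_{\dt/2}$ where $\phi^A$ is the flow of $dq = \n U(p)\,dt$ and $\phi^B$ the flow of $dp = -\n V(q)\,dt$, and then expand $(\widetilde q^{n+1},\widetilde p^{n+1})$ as a power series in $\dt$ with coefficients depending on $\n U, \n^2 U, \n V, \n^2 V$ and their higher derivatives evaluated at $(q,p)$. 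Substituting into $H = V(q) + U(p)$ and collecting terms, the $\dt^0$ term cancels trivially, the $\dt^1$ and $\dt^2$ terms cancel because the Verlet scheme is a consistent second-order integrator of the Hamiltonian flow (equivalently, because the modified Hamiltonian $H + \dt^2 H_2 + O(\dt^4)$ is conserved up to higher order, with only even powers appearing by the time-reversal symmetry of Verlet). This leaves
\[
H(\Phi_\dt(q,p)) - H(q,p) = \dt^3 \, \eta(q,p) + \dt^4 \widetilde r_\dt(q,p),
\]
where $\eta = \Lop_{\rm Ham} H_2$ is the leading coefficient and $\widetilde r_\dt$ is a remainder. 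The identification $\eta = \Lop_{\rm Ham}H_2$ comes from the standard backward-error-analysis computation for the Verlet/Störmer scheme: the energy drift over one step is, to leading order, the time-derivative along the exact Hamiltonian flow of the second-order correction term $\dt^2 H_2$ in the modified Hamiltonian, hence $\dt^2 \Lop_{\rm Ham} H_2 \cdot \dt$. The explicit formula for $H_2$ in~\eqref{eq: lead order rejection rate} is exactly the classical shadow-Hamiltonian correction for separable $H = V + U$, which I would either cite or verify by matching the $\dt^3$ coefficient directly from the Verlet expansion.

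Next I would control the remainder. Since $U, V \in \sS$, all derivatives of $U$ and $V$ grow at most polynomially, so by composition and the explicit finite recursion defining $\Phi_\dt$, each coefficient in the $\dt$-expansion of $\Phi_\dt(q,p)$ — and in particular $\eta$ and the $\dt^4$-remainder $\widetilde r_\dt$ — lies in $\sS$, and $\widetilde r_\dt$ grows at most polynomially uniformly in $\dt \in (0,\dt^*]$ for some $\dt^* > 0$ (one gets this from a Taylor remainder with integral form, bounding the fourth derivative of $s \mapsto H(\Phi_s(q,p))$ uniformly on $s \in [0,\dt^*]$ by a polynomial in $(q,p)$). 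This gives the expansion of the \emph{energy difference}; to get the expansion of the rejection rate $1 - A_\dt^{\rm Ham} = 1 - \min(1, \rme^{-\beta \Delta H})$ with $\Delta H = H(\Phi_\dt) - H(q,p)$, I would note $1 - \min(1,\rme^{-x}) = \max(0, 1 - \rme^{-x}) = \max(0,x) + O(x^2)$ for $x$ in a bounded set. Writing $x = \beta \Delta H = \beta \dt^3 \eta + \beta \dt^4 \widetilde r_\dt$, and using that $|x| \leq C\dt^3 \Li_\alpha$ (so $x$ is small for fixed $(q,p)$ once $\dt$ is small, but not uniformly — here I must be slightly careful), I get $\max(0,x) = \beta \dt^3 \max(0,\eta) + \dt^4(\text{bounded}) $: indeed $\max(0, a + b) - \max(0,a) $ is $1$-Lipschitz in $b$, so $|\max(0,\beta\dt^3\eta + \beta\dt^4\widetilde r_\dt) - \max(0,\beta\dt^3\eta)| \leq \beta\dt^4|\widetilde r_\dt|$. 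Setting $\xi = -\Lop_{\rm Ham}H_2 = -\eta$ (the sign convention so that $1 - A$, a rejection \emph{probability}, is governed by $\xi_+ = \max(0,\xi)$: when $\Delta H > 0$ the move is penalized) and $\xi_+ = \max(0,\xi)$, and absorbing the quadratic-in-$x$ correction from $1 - \rme^{-x} = x + O(x^2)$ (which contributes $O(\dt^6 \Li_\alpha^2)$, hence $O(\dt^4 \Li_{\alpha'})$ after adjusting $\dt^*$ and $\alpha$) into the remainder, I obtain~\eqref{eq: rej rate HMC} with a remainder $r_\dt$ bounded in $L^\infty_{\Li_\alpha}$ uniformly in $\dt$. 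The lower bound $1 - A \geq 0$ is immediate since $A \leq 1$.

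The main obstacle I anticipate is the bookkeeping in the $\dt^3$ coefficient: one must carefully expand the three-stage Verlet map to third order and verify that the first- and second-order terms in $\Delta H$ vanish identically (not just in expectation) and that the third-order term collapses to the compact expression $\Lop_{\rm Ham} H_2$ with $H_2$ as given. This is a finite but somewhat delicate computation — it is cleanest to invoke backward error analysis for symmetric splitting methods (the existence of a modified Hamiltonian $\widetilde H_\dt = H + \dt^2 H_2 + O(\dt^4)$ conserved to higher order, with $H_2$ the known truncation-error term for the Strang/Verlet splitting of $H = V + U$), which immediately gives both the vanishing of lower-order terms and the identification of $\eta$ with $\Lop_{\rm Ham} H_2$; the explicit formula for $H_2$ then follows from the standard BCH-type formula $[\Lop_B,[\Lop_B,\Lop_A]]$ and $[\Lop_A,[\Lop_A,\Lop_B]]$ applied to the Lie derivatives $\Lop_A = \n U \cdot \n_q$, $\Lop_B = -\n V \cdot \n_p$. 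A secondary subtlety is that $\Delta H$ is not uniformly small in $(q,p)$, so the passage from the expansion of $\Delta H$ to that of $1 - \min(1,\rme^{-\beta\Delta H})$ must be done via the Lipschitz property of $x \mapsto \max(0,x)$ and a polynomial-growth control of all intermediate quantities, rather than a naive Taylor expansion of the exponential valid only for bounded arguments — this is exactly where the uniform-polynomial-growth framework of $\sS$ and $L^\infty_{\Li_\alpha}$ (and the freedom to enlarge $\alpha$ and shrink $\dt^*$) does the work.
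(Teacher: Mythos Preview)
Your approach is essentially the paper's: expand $\Phi_\dt$ in powers of $\dt$, use backward error analysis / the modified Hamiltonian to identify the $\dt^3$ coefficient of $\Delta H$, and then pass from $\Delta H$ to $1-\min(1,\rme^{-\beta\Delta H})$ via an elementary inequality. The paper makes the middle step slightly more concrete by writing $\Phi_\dt = \phi_\dt + \dt^3 G + O(\dt^4)$ with $G$ the symplectic gradient of $H_2$, and then using energy conservation $H\circ\phi_\dt = H$ to get $\Delta H = \dt^3\, G\!\cdot\!\nabla H + O(\dt^4) = -\dt^3\,\Lop_{\rm Ham}H_2 + O(\dt^4)$ in one line, rather than invoking BCH.

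One sign slip to fix: your modified-Hamiltonian reasoning actually gives $\Delta H \approx -\dt^2\big[H_2(\Phi_\dt)-H_2\big] \approx -\dt^3\,\Lop_{\rm Ham}H_2$, so the leading coefficient of $\Delta H$ is $\xi = -\Lop_{\rm Ham}H_2$ directly, not $\eta = \Lop_{\rm Ham}H_2$ with $\xi=-\eta$. With your current signs you would end up with $(-\xi)_+$ rather than $\xi_+$ in the rejection rate. Once that is corrected, your Lipschitz handling of $x\mapsto\max(0,x)$ together with the $O(x^2)$ bound on $1-\rme^{-x}-x_+$ (the paper uses the equivalent double inequality $x_+-x_+^2/2\le 1-\min(1,\rme^{-x})\le x_+$) gives exactly~\eqref{eq: rej rate HMC}.
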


As discussed in the introduction, the crucial part of the sampling usually is the sampling of the marginal~$\nu$ of the canonical measure~$\mu$ in the position variable. There is therefore some freedom in the choice of~$U$. The expression of the rejection rate~\eqref{eq: lead order rejection rate} suggests that~$U$ should be chosen such that derivatives of order up to~3 are not too large, in order for $\xi_+$ to be as small as possible. This remark is used in Section~\ref{section AR-Langevin dynamics} to improve the kinetic energy functions currently considered in adaptively restrained Langevin dynamics. 

\begin{proof}
The idea of the proof is that, according to results of backward analysis~\cite{hairer2006geometric}, the first order modified Hamiltonian $H + \dt^2 H_2$ should be preserved at order~$\dt^5$ over one timestep. The energy variation is therefore given, at dominant order, by $-\dt^2 [ H_2(\Phi_\dt(q,p))-H_2(q,p) ] \simeq -\dt^3 (\Lop_{\rm Ham} H_2)(q,p)$, which motivates the dominant term in the rejection rate. 

To identify $H_2$ and make the previous reasoning rigorous, we write the proposal~\eqref{Verlet} as
\[
\Phi_{\de t}\left(q,p\right)=
\left(
\begin{aligned}
&q+\n U\left( p-\n V(q)\frac{\de t}{2}\right)\de t \\
&p-\n V(q)\frac{\de t}{2}-\n V\left(q+\n U\left( p-\n V(q)\frac{\de t}{2}\right)\de t\right)\frac{\de t}{2} \\
\end{aligned}
\right),
\]
so that
\begin{equation}
\begin{aligned}
\Phi_{\de t}\left(q,p\right) & = \begin{pmatrix} q \\ p \end{pmatrix} + \dt \begin{pmatrix} \n U(p) \\ -\n V(q) \end{pmatrix} - \frac{\dt^2}{2} \begin{pmatrix} \n^2 U(p) \n V(q) \\ \n^2 V(q) \n U(p) \end{pmatrix} \\
& \ \ \ \ +  \frac{\dt^3}{4} \begin{pmatrix} \dps \frac12 D^3 U(p): \n V(q)^{\otimes 2 } \\[5pt] \n^2V(q) \n^2 U(p) \n V(q)- D^3V(q) :\n U(p)^{\otimes 2} \end{pmatrix} + \dt^4 R_\dt(q,p),
\label{taylor phi}
\end{aligned}
\end{equation}
where, for a smooth function~$A$, the vector $D^3 A(x) : v^{\otimes 2}$ has components $v^T \nabla^2 (\partial_{x_i} A)v$, and the remainder $R_\dt(q,p)$ grows at most polynomially in $(q,p)$, uniformly in~$\dt$ (this is easily seen by performing Taylor expansions with integral remainders). Denoting by $y = (q,p)^T$, we note that the Hamiltonian dynamics~\eqref{eq:Hamiltonian_dyn} can be reformulated as
\[
\dot{y} = F(y), \qquad F(y) = \begin{pmatrix} \n U(p) \\ -\n V(q) \end{pmatrix}.
\]
This implies that
\[
\ddot{y} = DF(y) F(y) = -\begin{pmatrix} \n^2 U(p) \n V(q) \\ \n^2 V(q) \n U(p) \end{pmatrix},
\]
and 
\[
\dddot{y} = \begin{pmatrix} D^3 U(p): \n V(q)^{\otimes 2 } - \n^2 U(p) \n^2 V(q) \n U(p) \\ -D^3 V(p): \n U(p)^{\otimes 2 } + \n^2 V(q) \n^2 U(p) \n V(q)\end{pmatrix}.
\]
Therefore, denoting by $\phi_t$ the flow of the Hamiltonian dynamics~\eqref{eq:Hamiltonian_dyn}, it holds
\begin{equation}
  \label{eq:error_expansion_Phi}
  \Phi_\dt(q,p) = \phi_\dt(q,p) + \dt^3 G(q,p) + \dt^4 \widetilde{R}_\dt(q,p),
\end{equation}
where
\[
G(q,p) = \frac{1}{12}\left(
\begin{aligned}
&-\frac{1}{2}D^3 U(p):\n V(q)^{\otimes 2}+2\n^2 U(p)\n^2 V(q)\n U(p) \\
&-D^3V(q):\n U(p)^{\otimes 2}+\n^2 V(q)\n^2 U(p)\n V(q)
\end{aligned}
\right)\,,
\]
with a remainder $\widetilde{R}_\dt(q,p)$ growing at most polynomially in $(q,p)$ uniformly in~$\dt$. A simple computation shows that 
\[
G=\begin{pmatrix} \n_p H_2(q,p) \\ -\n_q H_2(q,p) \end{pmatrix}, 
\]
with $H_2$ defined in~\eqref{eq: lead order rejection rate}. Note that for the standard kinetic energy $U_{\rm std}$, this expression reduces to the one derived in~\cite{hairer2003geometric,sweet2009separable}.

From the error estimate~\eqref{eq:error_expansion_Phi}, we compute
\[
\begin{aligned}
H(\Phi_\dt(q,p)) - H(q,p) & = H(\phi_\dt(q,p)) - H(q,p) + \dt^3 G(q,p) \n H(q,p) + \dt^4 \widehat{R}_\dt(q,p) \\
& = -\dt^3 \Lop_{\rm Ham} H_2(q,p) + \dt^4 \widehat{R}_\dt(q,p),  
\end{aligned}
\]
where the remainder $\widehat{R}_\dt(q,p)$ grows at most polynomially in $(q,p)$ uniformly in~$\dt$. This allows to identify $\xi = -\Lop_{\rm Ham} H_2$ as the leading order term of the energy variation over one step. In order to compute the expected rejection rate, we rely on the inequality 
\[
x_+-\frac{x_+^2}{2}\leq 1-\min \left(1,{\rm e}^{-x}\right)\leq x_{+}, \quad x_{+}=\max (0,x).
\]
This implies that 
\begin{equation}
0 \leq 1 - A_{\Delta t}^{\rm Ham}\left(q^{n},p^n\right) = \de t^3\xi_{+}\left(q^{n},p^n\right) + \dt^4 \mathscr{R}_{\de t}(q^n,p^n)\,,
\label{eq: rejection rate proof}
\end{equation}
where the remainder $\mathscr{R}_{\de t}$ grows at most polynomially in $(q,p)$ uniformly in~$\dt$, which concludes the proof.
\end{proof}

As a corollary of the estimates~\eqref{eq: rej rate HMC} on the rejection rate and the consistency result~\eqref{eq:error_expansion_Phi} for the scheme without rejections, we can obtain weak-type expansions  of order 2 for the evolution operator
\[
P_\dt^{\rm Ham}\vp(q,p) = \mathbb{E}_\mathcal{U}\left[\vp\left(\Psi^{\rm Ham}_\dt(q,p,\mathcal{U})\right)\right].
\]

\begin{corollary} 
Assume that $U,V \in \mathscr{S}$  {  and $U$ is symmetric}. Then, for any $\vp \in \mathscr{S}$, there exist $\de t^*, K, \alpha>0$ such that
\[
P_{\de t}^{\rm Ham}\vp  = \vp +\de t \Lop_{\rm Ham}\vp +\frac{\de t^2}{2}\Lop_{\rm Ham}^2\vp + \de t^3R_{\de t}^{\rm Ham}\vp,
\]
where $\sup_{0<\de t \leq \de t^*}\left\| R_{\de t}^{\rm Ham} \vp\right\|_{L^{\infty}_{\K_{\alpha}}}\leq K$.
\end{corollary}

\begin{proof}
We write the generator of the Hamiltonian part as
\[
P_\dt^{\rm Ham}\vp(q,p) = \vp \left(\Phi_{\de t}(q,p)\right)+\left(1-A_{\de t}^{\rm Ham}(q,p)\right)\Big(\vp (q,-p)-\vp(\Phi_{\de t}(q,p))\Big).
\]
Since $A_\dt^{\rm Ham}(q,p) \in [0,1]$ and $\Phi_\dt(q,p)$ grows at most polynomially in $(q,p)$ uniformly in~$\dt$, a direct inspection of the latter expression shows that the operator $P_\dt^{\rm Ham}$ maps functions growing at most polynomially into functions growing at most polynomially: for any $\alpha \in \mathbb{N}$, there exist $\alpha' \in \mathbb{N}$ and $C_\alpha > 0$ such that
\begin{equation}
\label{eq:stab_L_infty_alpha_Ham}
\forall f \in L^\infty_{\Li_\alpha}, \qquad \left\|P_{\de t}^{\rm Ham}f \right\|_{L^\infty_{\Li_{\alpha}}} \leq C_\alpha \|f \|_{L^\infty_{\Li_{\alpha'}}}.
\end{equation}
In order to understand the behavior of the evolution operator for small~$\dt$, we first note that, for instance by the techniques reviewed in~\cite[Section~4.3]{Matthews}, it can be shown that, for any $\vp \in \sS$, 
\[
\vp \left(\Phi_{\de t}(q,p)\right) = \left(\vp + \dt \Lop_{\rm Ham}\vp + \frac{\dt^2}{2} \Lop^2_{\rm Ham}\vp + \dt^3 R_\dt^{\rm Verlet} \vp\right)(q,p), 
\]
where $R_\dt^{\rm Verlet} \vp$ grows at most polynomially in~$(q,p)$ uniformly in~$\dt$. Therefore, by~\eqref{eq: rejection rate proof}, 
\begin{equation}
  \label{eq:error_P_dt_Ham}
  P_\dt^{\rm Ham}\vp = \vp + \dt \Lop_{\rm Ham}\vp + \frac{\dt^2}{2} \Lop^2_{\rm Ham}\vp + \dt^3 R_{\dt}^{\rm Ham}\vp,
\end{equation}
where the remainder
\[
R_\dt^{\rm Ham}\vp(q,p)=\frac{1-A_{\de t}^{\rm Ham}(q,p)}{\dt^3}\Big(\vp (q,-p)-\vp(\Phi_{\de t}(q,p))\Big) + R_\dt^{\rm Verlet} \vp(q,p).
\] 
grows at most polynomially in $(q,p)$ uniformly in~$\dt$. 
\end{proof}

\subsection{Discretization of the fluctuation-dissipation}
\label{sec:disc_FD} 

In order to construct a GHMC scheme for \eqref{modified Langevin}, we need to generate momenta distributed according to 
\begin{equation}
  \label{eq:def_kappa}
  \kappa(dp) = Z_\kappa^{-1} {\rm e}^{-\beta U(p)} \, dp,
\end{equation}
which are then used as initial conditions in the Hamiltonian part of the scheme. This can be achieved through a discretization of the fluctuation-dissipation, corrected by a Metropolis procedure. 

We use here a scheme proposed in~\cite{fathi2015improving} for the elementary dynamics~\eqref{eq:elementary_FD}. The proposal function is given by
\begin{equation}
\widetilde{p}^{n+1}=\Phi_{\de t}^{\rm FD}(p^n, G^n) = p^n - \gamma \n U\left(p^n +\frac{1}{2}\sqrt{\frac{2\gamma\de t}{\beta}}G^n\right)\de t +\sqrt{\frac{2\gamma\de t}{\beta}}G^n\,,
\label{eq: FD verlet compact}
\end{equation}
where $(G^n)_{n \geq 0}$ is a sequence of i.i.d. standard $d$-dimensional Gaussian random variables. It seems that the computation of the probability density, to go from a given momentum $p$ to a new one $p'$, is difficult since $\Phi_{\de t}^{\rm FD}(p, G)$ depends nonlinearly on~$G$. It turns out however that the proposal~\eqref{eq: FD verlet compact} can itself be interpreted as the output of some one-step HMC scheme, starting from a random conjugate variable $R^n:=G^n/\sqrt{\beta}\in \R^d$ and for an effective timestep $h=\sqrt{2\gamma \de t}$: 
\begin{equation}
\left\{
\begin{aligned}
\ds p^{n+1/2} &= p^n +R^n\frac{h}{2}, \\
R^{n+1}&=R^n-\n U(p^{n+1/2})h, \\
\widetilde{p}^{n+1} &= p^{n+1/2} +R^{n+1}\frac{h}{2}. \\
\end{aligned}
\right.
\label{eq verlet gaussians}
\end{equation}
The Hamiltonian dynamics which is discretized by this scheme is the one associated with the energy
\[
E(p,R) = U(p) + \frac12 R^2.
\]
Therefore, the acceptance rule for the proposal~\eqref{eq: FD verlet compact} is
\[
A_\dt^{\rm FD}(p^n,G^n) = \min \left(1,\exp\left(-\beta\Big[ E\left(\widetilde{p}^{n+1},R^{n+1}\right) - E(p^n,R^n)\Big]\right)\right).
\]
In summary, the new momentum is therefore given by
\begin{equation}
\label{eq:HMC_like_proposal_for_Metropolis}
p^{n+1} = \Psi_\dt^{\rm FD}(p^n,G^n,\mathcal{U}^n) = p^n+\1_{\left\{\mathcal{U}^n\leq  A_{\de t}^{\rm FD}\left(p^n,G^n\right)\right\}}\left(\Phi_{\de t}^{\rm FD}(p^n, G^n)-p^n\right). 
\end{equation}

\begin{remark}
\label{remark dimension}
  Note that the efficiency of the Metropolization procedure of the fluctuation-dissipation does not degrade as the dimension increases when the kinetic energy is a sum of individual contributions, namely $d=ND$ (with usually $D \in \{1,2,3\}$) and 
  \[
  U(p)=\sum_{i=1}^N u(p_i),
  \]
  where $p_i\in \mathbb{R}^D$. Indeed, in this case, the dynamics in each component~$p_i$ are independent and can therefore be Metropolized independently one of another. More precisely, we consider in this case individual acceptance rates
  \[
  A_\dt^{{\rm FD,i}}(p_i^n,G_i^n) = \min \left(1,\exp\left(-\beta\Big[ E_i\left(\widetilde{p}_i^{n+1},R_i^{n+1}\right) - E_i(p_i^n,R_i^n)\Big]\right)\right),
  \]
  where $R^{n+1} = (R_1^{n+1},\dots,R_N^{n+1})$ with $R_i:=G_i^n/\sqrt{\beta}\in \R^D$ and $(G_i^n)$ is a sequence of i.i.d $D$-dimensional Gaussian random variables; and, for any $i\in \left\{ 1,\ldots, N\right\}$, the individual energies are
  \[
  E_i(p_i,R_i) = u(p_i) + \frac12 R_i^2.
  \]
  It is then possible to choose a timestep $\Delta t$ such that the average individual acceptance rates are, say, of order~1/2 (or any value in~$(0,1)$). In particular, the acceptance or rejection of the proposed move of one degree of freedom has no impact on the other ones. Note that the timestep therefore does not depend on the number of particles~$N$, in contrast with Metropolis dynamics which perform a global acceptance/rejection where the proposed moves for all degrees of freedom are either accepted or rejected at the same time. In the latter situation, the timestep should be chosen as some inverse fractional power of the number of degrees of freedom~$d=ND$ in order for the acceptance/rejection rate not to degrade as~$d$ increases (see for instance~\cite{RS98}). 
\end{remark}

In \cite{fathi2015improving}, the properties of the scheme~\eqref{eq: FD verlet compact} were studied for compact spaces. It is however possible to adapt some of the results obtained in this work for dynamics in unbounded spaces, upon introducing additional assumptions on the kinetic energy function. 

\begin{assumption}
\label{ass:moments}
The marginal measure~$\kappa$ defined in~\eqref{eq:def_kappa} admits moments of all orders: for all $k \in \mathbb{N}$, there exists $M_k < +\infty$ such that 
\[
\int_\mathcal{E} |p|^k \, \kappa(dp) \leq M_k.
\]
\end{assumption}

We can then state the following weak type expansion for the evolution operator 
\[
P_{\de t}^{\rm FD}\vp(p) = \mathbb{E}_{\mathcal{U},G}\left[\vp\left(\Psi_\dt^{\rm FD}(p,G,\mathcal{U})\right)\right].
\]

\begin{lemma}
\label{lem:weak_FD}
Suppose that $U \in \mathscr{S}$ and that Assumption~\ref{ass:moments} holds. Then, for any $\vp \in \sS$, 
\begin{equation}
  \label{eq:error_P_dt_FD}
  P_{\de t}^{\rm FD}\vp = \vp + \de t \Lop_{\rm FD} \vp +\frac{\de t^2}{2}\Lop_{\rm FD}^2\vp + \de t^{5/2} R^{\rm FD}_{\de t}\vp,
\end{equation}
where the remainder $R_\dt^{\rm FD}\vp$ grows at most polynomially in~$(q,p)$ uniformly in~$\dt$. Moreover, the rejection rate is of order~$\dt^{3/2}$: there exist a function $\zeta_+ \in \mathscr{S}$ as well as $K,\dt^*>0$ and $\alpha \in \mathbb{N}$ such that 
\[
0 \leq 1-\mathbb{E}_G\left[A^{\rm FD}_{\de t}(p,G)\right] = \de t^{3/2} \zeta_+(p) + \de t^2 r_{\de t}(p)\,,
\]
with $\sup_{0 < \dt \leq \dt^*} \|r_{\de t}\|_{L^\infty_{\Li_\alpha}} \leq K$. Finally, $P_{\de t}^{\rm FD}$ maps functions growing at most polynomially into functions growing at most polynomially: for any $\alpha \in \mathbb{N}$, there exist $\alpha' \in \mathbb{N}$ and $C_\alpha > 0$ such that
\begin{equation}
\label{eq:stab_L_infty_alpha_FD}
\forall f \in L^\infty_{\Li_\alpha}, \qquad \left\|P_{\de t}^{\rm FD}f \right\|_{L^\infty_{\Li_{\alpha}}} \leq C_\alpha \|f \|_{L^\infty_{\Li_{\alpha'}}}.
\end{equation}
\end{lemma}

An important comment at this stage is that the leading order remainder in~\eqref{eq:error_P_dt_FD} involves a fractional power of the timestep, whereas it would be of order~$\dt^3$ for standard discretization schemes of weak order~2. This is typical of Metropolis-like dynamics, as already noted in~\cite{FHS14,fathi2015improving} for instance. 

The proof of the first two properties in Lemma~\ref{lem:weak_FD} is a direct extension of~\cite[Lemma~3]{fathi2015improving} and its proof, and is therefore omitted. In fact, the scaling of the rejection rate could be obtained by a result similar to Lemma~\ref{lemma rejection rate hmc} for the effective timestep~$h = \sqrt{2\gamma\dt}$ in view of the reformulation~\eqref{eq verlet gaussians}. For the last property, we rely on the equality  
\[
P_{\de t}^{\rm FD}\vp(p) = \mathbb{E}_{G}\left[A^{\rm FD}_\dt(p,G) \vp\left(\Phi_\dt^{\rm FD}(p,G)\right)\right] + \left(1-\mathbb{E}_{G}\left[A^{\rm FD}_\dt(p,G)\right]\right)\vp(p),
\]
as well as on the fact that $\Phi_\dt^{\rm FD}(p,G)$ grows at most polynomially in~$(p,G)$ uniformly in~$\dt$.

\subsection{Complete Generalized Hybrid Monte-Carlo scheme}
\label{Complete Generalized Hybrid Monte-Carlo scheme}

The complete scheme for the metropolized Langevin dynamics with general kinetic energy is obtained by concatenating the updates~\eqref{eq: hmc} and~\eqref{eq:HMC_like_proposal_for_Metropolis}. Depending on whether Lie or Strang splittings are considered, and also on the order in which the operations are performed, several schemes can be considered. For instance, the scheme characterized by the evolution operator $P_\dt^{\rm GHMC} = P_\dt^{\rm FD}P_\dt^{\rm Ham}$ corresponds to first updating the momenta with~\eqref{eq:HMC_like_proposal_for_Metropolis}, and then updating both positions and momenta according to~\eqref{eq: hmc}.

All such splitting schemes preserve the invariant measure $\mu$ by construction. They are also all of weak order at least~1. A higher order weak accuracy can however be obtained for Strang splittings, as made precise in the following proposition.

\begin{proposition}
  \label{lemma evolution operator ghmc}
  Assume that $U,V \in \sS$, that Assumption~\ref{ass:moments} holds  {  and $U$ is symmetric}. Consider $P_{\de t}^{\rm GHMC} = P_{\dt/2}^{\rm FD}P_\dt^{\rm Ham}P_{\dt/2}^{\rm FD}$ or $P_{\de t}^{\rm GHMC} = P_{\dt/2}^{\rm Ham}P_\dt^{\rm FD}P_{\dt/2}^{\rm Ham}$. Then, for any $\vp \in \sS$, there exist $\dt^*,K,\alpha > 0$ such that
  \begin{equation}
    \label{eq: expansion Metropolis complete}
    P_{\de t}^{\rm GHMC}\vp = \vp + \dt \Lop \vp + \frac{\de t^2}{2}\Lop^2\vp +\de t^{5/2}r_{\de t, \vp},
  \end{equation}
  where $\sup_{0 < \dt \leq \dt^*} \|r_{\de t, \vp}\|_{L^\infty_{\Li_\alpha}} \leq K$. 
\end{proposition}

As in Lemma~\ref{lem:weak_FD}, we see the appearance of fractional powers of the timestep in the remainder. 

\begin{proof}
This result is a direct consequence of the estimates~\eqref{eq:error_P_dt_Ham} and~\eqref{eq:error_P_dt_FD}. We however sketch the proof for completeness. Fix $\vp \in \sS$. In view of~\eqref{eq:error_P_dt_FD}, 
\[
P_{\dt/2}^{\rm FD}P_\dt^{\rm Ham}P_{\dt/2}^{\rm FD}\varphi = P_{\dt/2}^{\rm FD}P_\dt^{\rm Ham}\widetilde{\vp} + \dt^{5/2} P_{\dt/2}^{\rm FD}P_\dt^{\rm Ham}R^{\rm FD}_{\de t}\vp, 
\]
where
\[
\widetilde{\vp} = \left(\Id + \frac{\de t}{2} \Lop_{\rm FD}  +\frac{\de t^2}{8}\Lop_{\rm FD}^2\right)\vp \in \sS.
\]
The remainder $P_{\dt/2}^{\rm FD}P_\dt^{\rm Ham}R^{\rm FD}_{\de t}\vp$ grows at most polynomially in~$(q,p)$ uniformly in~$\dt$ by~\eqref{eq:stab_L_infty_alpha_Ham} and~\eqref{eq:stab_L_infty_alpha_FD}. We next use~\eqref{eq:error_P_dt_Ham} to write 
\[
P_{\dt/2}^{\rm FD}P_\dt^{\rm Ham}\widetilde{\vp} = P_{\dt/2}^{\rm FD}\widehat{\vp} + \dt^3 P_{\dt/2}^{\rm FD}
R_\dt^{\rm Ham}\widetilde{\vp}, 
\]
where
\[
\widehat{\vp} = \left(\Id + \de t \Lop_{\rm Ham}  +\frac{\de t^2}{2}\Lop_{\rm Ham}^2\right)\left(\Id + \frac{\de t}{2} \Lop_{\rm FD}  +\frac{\de t^2}{8}\Lop_{\rm FD}^2\right)\vp \in \sS.
\]
The remainder $P_{\dt/2}^{\rm FD}R_\dt^{\rm Ham}\widetilde{\vp}$ grows at most polynomially in~$(q,p)$ uniformly in~$\dt$ by~\eqref{eq:stab_L_infty_alpha_FD}. By applying again~\eqref{eq:error_P_dt_FD}, we finally obtain that
\[
\begin{aligned}
& P_{\dt/2}^{\rm FD}P_\dt^{\rm Ham}P_{\dt/2}^{\rm FD}\varphi = \dt^{5/2} \mathcal{R}_{\dt,\varphi} \\
& + \left(\Id + \frac{\de t}{2} \Lop_{\rm FD}  +\frac{\de t^2}{8}\Lop_{\rm FD}^2\right)\left(\Id + \de t \Lop_{\rm Ham}  +\frac{\de t^2}{2}\Lop_{\rm Ham}^2\right)\left(\Id + \frac{\de t}{2} \Lop_{\rm FD}  +\frac{\de t^2}{8}\Lop_{\rm FD}^2\right)\vp,
\end{aligned}
\]
where the remainder $\mathcal{R}_{\dt,\varphi}$ grows at most polynomially in~$(q,p)$ uniformly in~$\dt$. The conclusion follows by expanding the last term on the right-hand side, grouping together terms of order~$\dt$ and~$\dt^2$, and gathering the higher order terms in the remainder.
\end{proof}

As a corollary of the weak error expansion~\eqref{eq: expansion Metropolis complete}, finite time weak type error estimates can be obtained by standard techniques under some technical conditions on~$U,V$; see~\cite[Chapter~2]{MT04} for a general presentation of these techniques, and for instance~\cite{LMT14} for an application to Langevin dynamics. Under these conditions, for a given sufficiently smooth observable~$\varphi$ and a fixed time $T > 0$, there is a constant $C_{T,\varphi}$ such that 
\begin{equation}
  \label{eq:finite_time_weak_error_GHMC}
  \sup_{0 \leq n \leq T/\dt} \Big| \mathbb{E}\left[ \varphi(q^n) \right] - \mathbb{E}\left[ \varphi(q_{n\dt}) \right] \Big| \leq C_{T,\varphi} \dt^{3/2}. 
\end{equation}
If the fluctuation/dissipation was integrated with a standard Metropolis Adjusted Langevin Algorithm (MALA)~\cite{RDF78,RT96}, i.e. the proposal~\eqref{eq: FD verlet compact} was replaced by $\widetilde{p}^{n+1}  = p^n - \gamma \nabla U(p^n) \dt + \sqrt{2\gamma \beta^{-1}\dt}\, G^n$, then an error estimate similar to~\eqref{eq:finite_time_weak_error_GHMC} would hold, but with a larger term~$\dt$ instead of~$\dt^{3/2}$ on the right-hand side. Such error estimates are illustrated in Section~\ref{sec:weakErrorNumerics}. 

On the other hand, it is much more difficult to prove error estimates for infinitely long times, such as time integrated correlation functions (Green--Kubo type formulas). One framework to this end is provided in~\cite{Matthews} and relies on an exponential convergence of $(P_\dt^{\rm GHMC})^n \vp$ towards $\mathbb{E}_\mu(\vp)$, uniformly in the spaces $L^\infty_{\Li_\alpha}$, and, most importantly, with a rate depending on the physical time $n\dt$, uniformly in~$\dt$. A typical way to obtain such estimates is to establish a Lyapunov condition for the functions $\Li_\alpha$ and a minorization condition on a compact space, in order to apply the results from~\cite{MeynTweedie,HairerMattingly-Yet}. Although we were able to prove a minorization condition in the case when $U - U_{\rm std}$ is bounded and the position space $\mathcal{D}$ is compact (see~\cite{PhD}), we were not able to establish a Lyapunov condition. The problem is that, even for compact position spaces and standard, quadratic kinetic energies, the rejection rate of the fluctuation/dissipation part of the scheme degenerates as $|p|\to +\infty$. Such difficulties were already encountered in the study of Metropolized Langevin-type algorithms on unbounded spaces, where the problem was taken care of by an appropriate truncation of the accessible space~\cite{BH13}.

\section{Applications}
\label{section applications}

We present in this section simulation results for the Langevin dynamics~\eqref{modified Langevin}. We consider two applications. The first one is the optimization the shape of the kinetic energy in the Adaptively Restrained Langevin dynamics (Section~\ref{section AR-Langevin dynamics}). The second one potentially has a much more important impact since we show that an appropriate choice of the kinetic energy can alleviate metastable features of Langevin dynamics and hence improve the sampling of probability measures (see Section~\ref{sec:num_non_glob_Lip}). We also illustrate on this second example the weak error estimates~\eqref{eq:finite_time_weak_error_GHMC} which show that average dynamical properties are well reproduced with the scheme we use.

\subsection{Adaptively restrained Langevin dynamics}
\label{section AR-Langevin dynamics}

The Adaptively Restrained Particle Simulation method was proposed in \cite{PRL-ARPS} in order to reduce the computational complexity of the forces update. The aim of this section is to devise better kinetic energy functions for the adaptively restrained (AR) Langevin dynamics, allowing for larger timesteps in the simulations. We start by recalling the kinetic energy function used in the original AR Langevin dynamics~\cite{PRL-ARPS} in Section~\ref{sec:spline}, where we also propose an alternative kinetic energy function. The relevance of this alternative energy function is studied in Section~\ref{sec:efficiency_U}, where we use the rejection rates of the GHMC algorithm to quantify the stability of the schemes under consideration. In essence, we fix an admissible rejection rate, and find the largest timestep for which the rejection rate is lower or equal to this tolerance. We therefore see the rejection rate as a measure of the stability, understood in this section as taking timesteps as large as possible while maintaining an appropriate consistency in the energy variation.

\subsubsection{Kinetic energy functions for AR Langevin}
\label{sec:spline}

In AR Langevin, the standard kinetic energy is replaced by a kinetic energy which vanishes for small values of momenta and matches the standard kinetic energy for sufficiently large values of momenta. The transition between these two regions is made in the original model~\cite{PRL-ARPS} by an interpolation spline $s_{\rm org}$ which ensures the regularity of the transition on the kinetic energy itself. More precisely, introducing two energy parameters $0 < \Er < \Ef$,
 \begin{equation}
\begin{aligned}
\displaystyle \UO (p)=\sum_{i=1}^N u(p_i)
 \quad\text{ where }\quad
\ds
u(p_i)=
\left\{
\begin{aligned}
  0 & \qquad \text{ for} \quad\frac{p_i^2}{2m_i}\leq \Er,\\
  s_{\rm org}\left(\frac{p_i^2}{2m_i}\right) & \qquad  \text{ for}\quad \frac{p_i^2}{2m_i}\in\left[\Er, \Ef\right],\\
  \frac{p_i^2}{2m_i}& \qquad  \text{ for} \quad \frac{p_i^2}{2m_i}\geq \Ef. 
\end{aligned}
\right.
\end{aligned}
\label{def arps kinetic energy}
 \end{equation}
The function $s_{\rm org}$ is such that $x \mapsto s_{\rm org}(x) \, \1_{x \in [\Er,\Ef]} + x \, \1_{x > \Ef}$ is $C^2(\R_+)$. The original AR Langevin kinetic energy was motivated by some physical interpretation in terms of momentum-dependent masses. One unpleasant feature of the definition~\eqref{def arps kinetic energy} is that the derivatives $\nabla U$ which appear in the dynamics~\eqref{modified Langevin} are typically large at the transition points (see Figure~\ref{fig:nablaU}). Since the dynamics is determined by $\nabla U$, a more satisfactory approach seems to interpolate the kinetic force $\nabla U$ between~0 in the region of small momenta and $M^{-1}p$ in the region of large momenta. We introduce to this end a second spline function $s_{\rm new}$ and define, for two velocity parameters $0 < \er < \ef$,
\begin{equation}
  \begin{aligned}
    \displaystyle \UN(p)=\sum_{i=1}^d u(p_i)
    \quad\text{ where }\quad
    \ds
    u(p_i)=
    \left\{
    \begin{aligned}
      S_{\er\ef} &  \text{ for} \quad \frac{\left|{p_{i}}\right|}{m_i}\leq \er,\\
      s_{\rm new}\left(p_i\right) &  \text{ for} \quad\frac{\left|{p_{i}}\right|}{m_i}\in\left[\er, \ef\right],\\
      \frac{p_i^2}{2m_i}&  \text{ for} \quad \frac{\left|{p_{i}}\right|}{m_i}\geq \ef
    \end{aligned}
    \right.
  \end{aligned}
  \label{def new arps kinetic energy}
\end{equation}
where $S_{\er\ef}$ is a constant ensuring the continuity of the kinetic energy. Figures~\ref{fig:modifiedHamiltonian_parametrization} and~\ref{fig:nablaU} compare the original and new kinetic energies and their derivatives. 
 Note that the alternative kinetic energy~\eqref{def new arps kinetic energy} leads to a smaller maximal value of the kinetic force $\n U$ than the original AR kinetic energy~\eqref{def arps kinetic energy}. This is also true for higher order derivatives of~$U$. 

It is difficult to directly compare the canonical distributions of momenta associated with $\UO$ and~$\UN$. For instance, it is not possible in general to ensure that these two distributions coincide for small and large momenta, because of the normalization constant in the probability distribution. In the sequel, we consider $\Er = m_i \er^2/2$ and $\Ef = m_i \ef^2/2$ for the $i$th particle, in order to have a constant kinetic energy (resp. a standard kinetic energy) in the same energy intervals.

\begin{figure}[t]
  \centering
    \begin{subfigure}[b]{0.45\textwidth}
      \includegraphics[width=\textwidth]{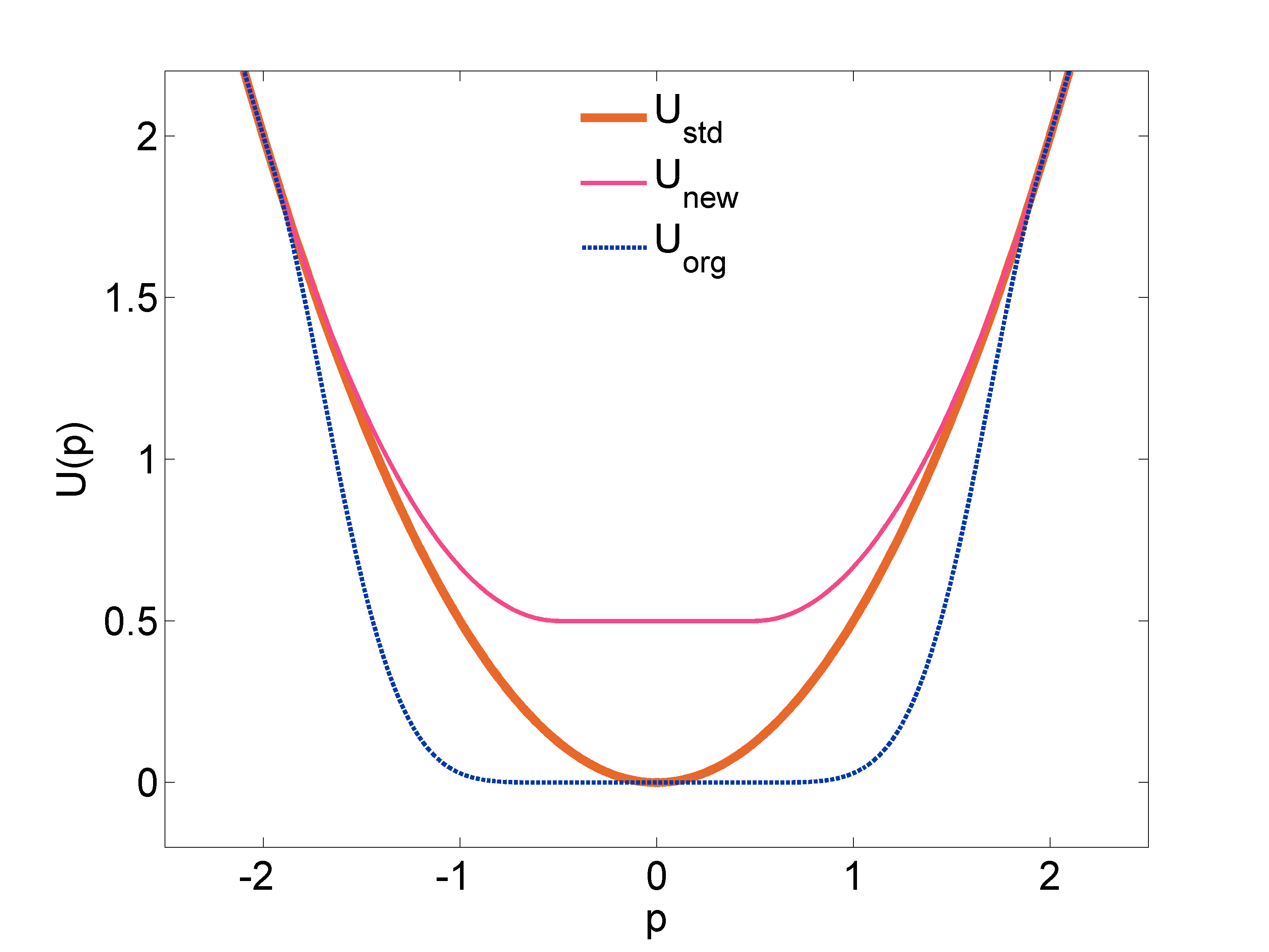}
      \caption{Comparison of the AR-kinetic energy functions \eqref{def arps kinetic energy} and \eqref{def new arps kinetic energy}.}
      \label{fig:modifiedHamiltonian_parametrization}
    \end{subfigure}
    ~ 
    \begin{subfigure}[b]{0.45\textwidth}
      \includegraphics[width=\textwidth]{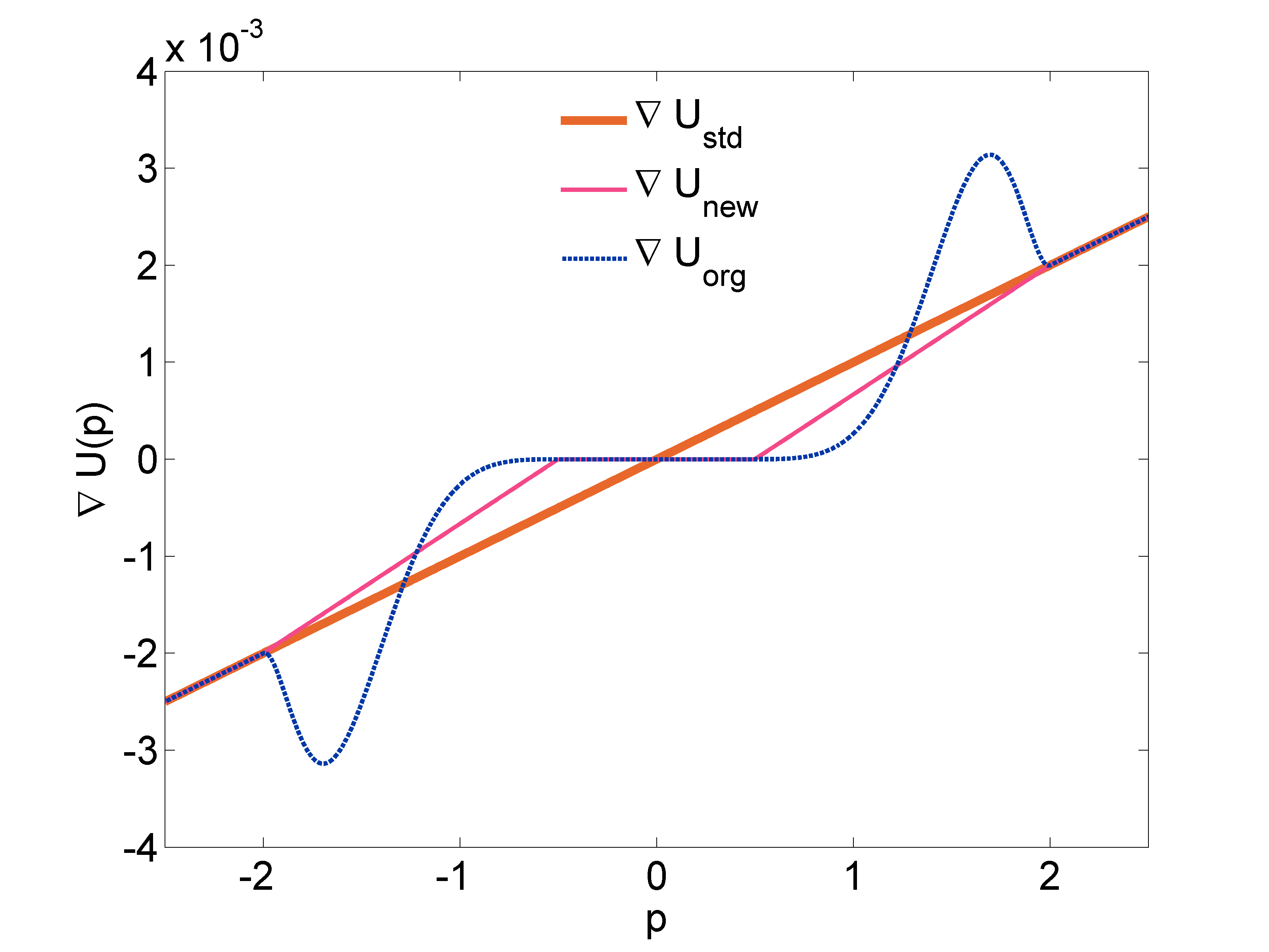}
      \caption{Gradient interpolation of the kinetic energy ($U_{\rm new}$) versus function interpolation ($U_{\rm org}$).}
      \label{fig:nablaU}
    \end{subfigure}
    \caption{Comparison between the AR-kinetic energy function \eqref{def new arps kinetic energy} and the original AR kinetic energy~\eqref{def arps kinetic energy}.}
\end{figure}

\subsubsection{Average rejection rates }
\label{sec:efficiency_U}

Since the AR-kinetic energy in general has derivatives larger than the ones of the standard kinetic energy, the timestep should be reduced in order to preserve the stability of the numerical method. We characterize in this section the possible reduction of the timestep due to the modification of the kinetic energy. As described in Section~\ref{Complete Generalized Hybrid Monte-Carlo scheme}, we metropolize the AR-Langevin dynamics by first integrating the Hamiltonian part with~\eqref{eq: hmc} and then the fluctuation-dissipation part with~\eqref{eq:HMC_like_proposal_for_Metropolis}. This corresponds to the evolution operator $P_\dt^{\rm GHMC} = P_\dt^{\rm Ham}P_\dt^{\rm FD}$.

Recall that the average rejection rate of the Hamiltonian and fluctuation/dissipation parts, namely (with expectations over $(q,p) \sim \mu$ and over the random variables used in the updates)
\[
\mathcal{R}^{\rm Ham}(\de t):=\E\left(1-A^{\rm Ham}_{\de t}(q,p)\right),
\qquad
\mathcal{R}^{\rm FD}(\de t):=\E\left[1-A^{\rm FD}_{\de t}(p,G)\right],
\] 
respectively scale as $\dt^3$ and $\de t^{3/2}$ (see Lemmas~\ref{lemma rejection rate hmc} and~\ref{lem:weak_FD}). We consider three kinds of AR-kinetic energies: the original function interpolation~\eqref{def arps kinetic energy}, and two interpolation functions~\eqref{def new arps kinetic energy} based on the gradient. More precisely, we either choose a linear spline or a $C^2$ spline by a polynomial of order~5 on the gradient $\nabla U$. The corresponding kinetic energies are respectively $C^2$ and $C^3$. The aim is to check the scaling of the rejection rates in terms of powers of $\dt$, and to estimate the prefactors for the various kinetic energies. 

We consider a system of $64$ particles of mass $m_i=1$ in a three dimensional periodic box with particle density $\rho=0.56$. The particles interact by a purely repulsive WCA pair potential, which is a truncated Lennard-Jones potential~\cite{SBB88}:
\[
V_{\rm WCA}(r) = \left \{ \begin{array}{cl}
  \displaystyle 4 \varepsilon_{\rm LJ} \left [ \left ( \frac{\sigma_{\rm LJ}}{r} \right )^{12}
    - \left ( \frac{\sigma_{\rm LJ}}{r}\right )^6 \right ] + \varepsilon_{\rm LJ} & \quad {\rm if \ } r \leq r_0, \\
  0 & \quad {\rm if \ } r > r_0,
\end{array} \right.
\]
where $r$ denotes the distance between two particles, $\varepsilon_{\rm LJ}$ and $\sigma_{\rm LJ}$ are two positive parameters and $r_0=2^{1/6}\sigma_{\rm LJ}$. In our simulations the parameters of the potential are set to $\e_{\rm LJ}=1, \sigma_{\rm LJ}=1$, while the parameters of the AR-Langevin dynamics \eqref{modified Langevin} are set to $\gamma=1,\beta=1$.

Figure~\ref{fig:Rejection_rate_FD} shows the average rejection rates for the AR parameters $\ef=2$ and $\er=1$ for $\UN$, as well as $\Ef=2$ and $\Er=0.5$ for $\UO$. This choice of parameters corresponds to $\sim 30\%$ percent of particles which are frozen for both AR-kinetic energies, \textit{i.e.} which are in the region where $\n U$ vanishes (see~\cite{trstanova2015speedUpARPS} for a thorough discussion on the link between the percentage of frozen particles and the algorithmic speed-up). Note that the predicted scalings of the rejection rates are recovered in all cases. The prefactor is however larger for the kinetic energy $\UO$ from~\cite{PRL-ARPS} than for $\UN$, especially for the fluctuation-dissipation part. The prefactor is also slightly smaller for the kinetic energy based on the gradient interpolation with a linear function, which is fortunate since $\n U$ has a lower computational cost than for interpolations based on higher order splines. 

\begin{figure}[t]
  \centering
  \begin{subfigure}[b]{0.45\textwidth}
    \includegraphics[width=\textwidth]{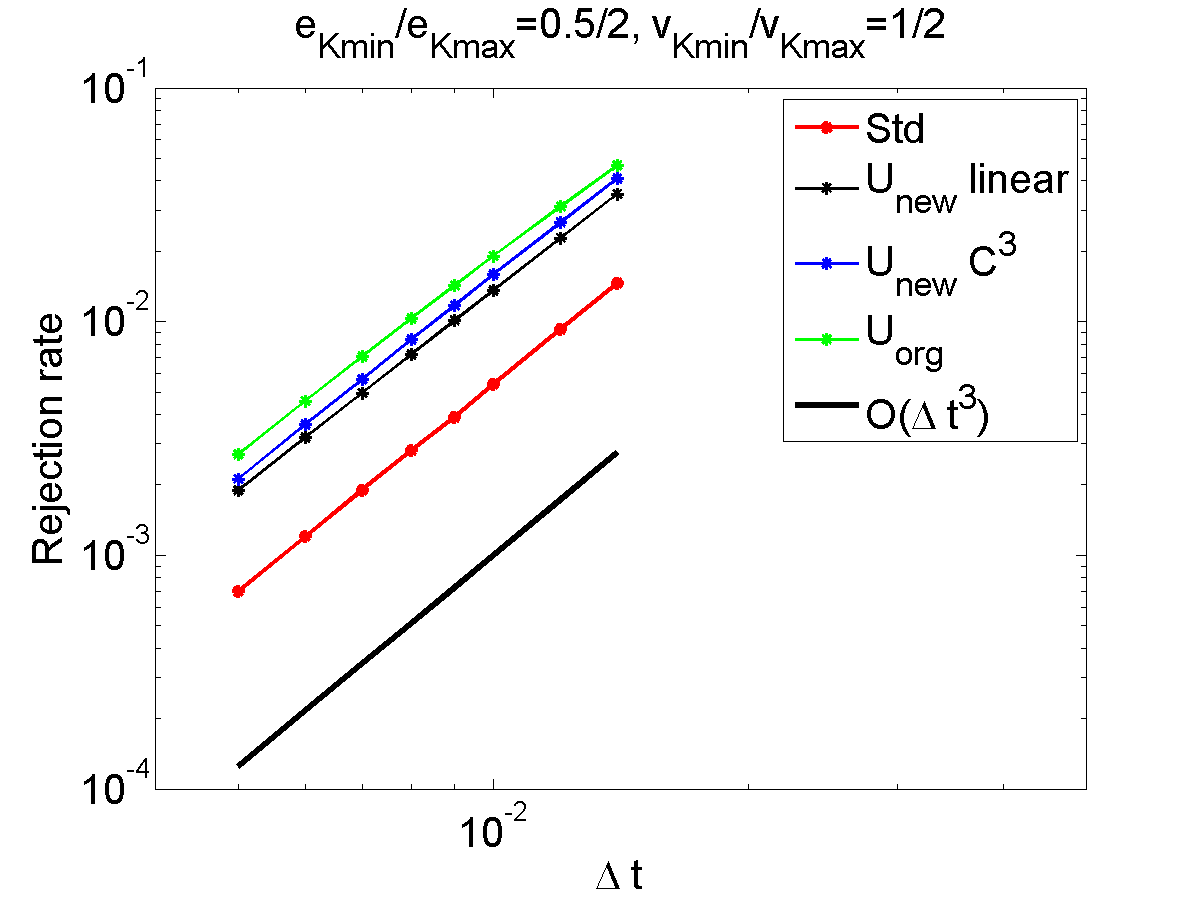}
    \caption{Hamiltonian part}
    \label{fig:rejRatesHam6_delta_0d9}
  \end{subfigure}
  ~ 
  \begin{subfigure}[b]{0.45\textwidth}
    \includegraphics[width=\textwidth]{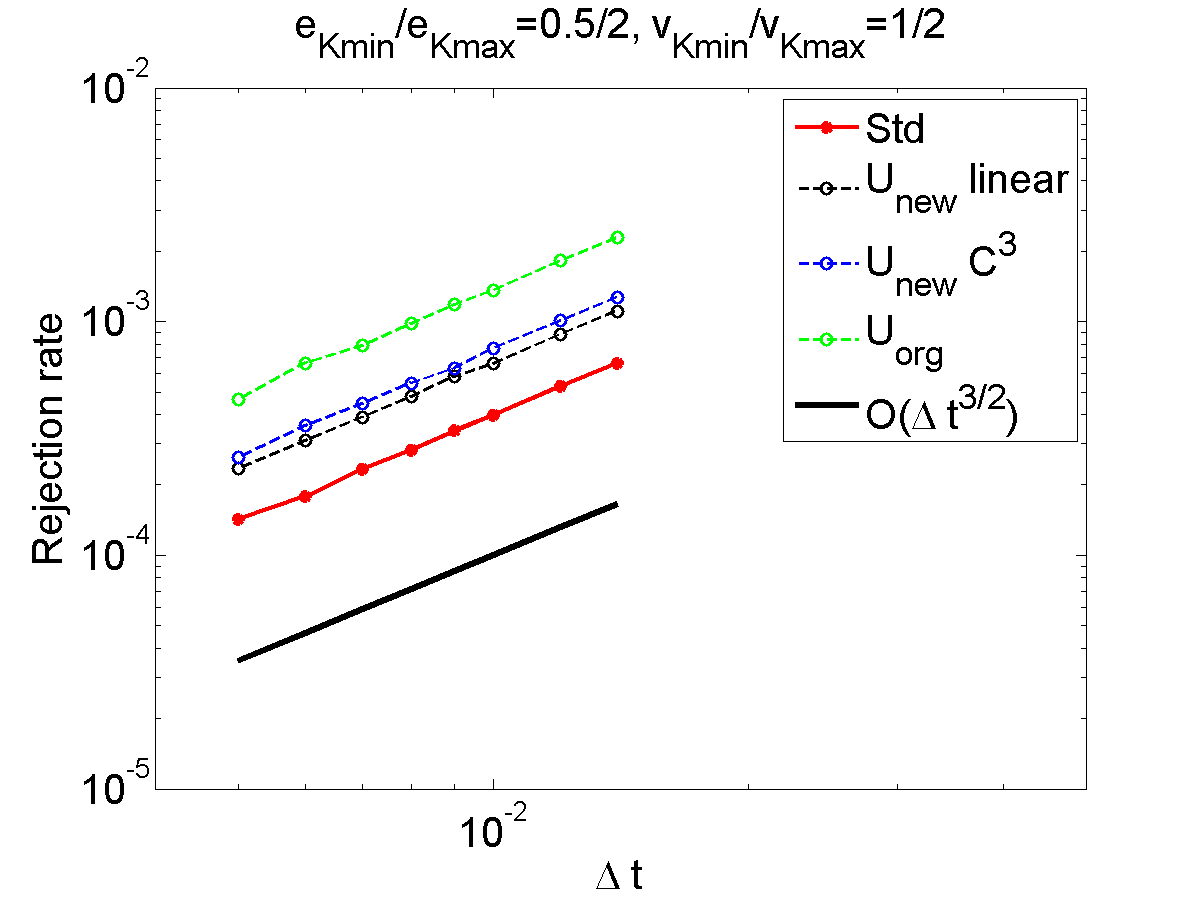}
    \caption{Fluctuation-dissipation part.}
    \label{fig:rejRatesFD6_delta_0d9}
  \end{subfigure}
  \caption{
    Average rejection rates of GHMC as a function of the timestep for various kinetic energies (see text). The scaling of the rejection rates corresponds to the predicted orders, {\it i.e.} $\dt^3$ for the Hamiltonian part and $\dt^{3/2}$ for the fluctuation-dissipation part. }
  \label{fig:Rejection_rate_FD}
\end{figure}

It is possible to numerically determine the prefactor $C$ such that the rejection rate is approximately equal to $C\dt^\alpha$ (with $\alpha=3$ for the Hamiltonian part, and $\alpha=3/2$ for the fluctuation/dissipation). We refer to~\cite{PhD} for numerical evidence of improved properties of the new AR-kinetic energy function demonstrated by a reduced prefactor in the rejection rate of the GHMC scheme (see also Figure~\ref{fig:Rejection_rate_FD}).

We are now in position to determine the variations in the admissible timesteps as a function of the kinetic energies. We fix to this end a rejection rate, for the Hamiltonian part since this subdynamics mixes information on the positions and momenta, and involves the forces $-\n V(q)$ which are often at the origin of the stability limitations. Similar results are however obtained for the fluctuation/dissipation part, see~\cite{PhD}.

In our tests, we set the target rejection rate to two values: $\mathcal{R}^{\rm Ham}(\dt) \in \left\{ 0.001, 0.5\right\}$. Figure~\ref{fig:Dt} presents the timesteps $\dt$ achieving the desired rejection rates (normalized by $\de t_{\rm std}$, the timestep corresponding to the given rejection rate for the standard quadratic energy), for the kinetic energy $\UN$ (with an interpolation spline such that $\UN\in C^3$) and for various values of the parameters. We observe that the timestep should be reduced with respect to the standard case when the transition becomes somewhat sharper, \textit{i.e.} for $\delta:=\er/\ef$ approaching~1. Surprisingly, we observe that for smaller values of $\delta$, the timestep can in fact be increased compared to standard Langevin dynamics.

\begin{figure}[t]
    \centering
    \begin{subfigure}[b]{0.48\textwidth}
      \includegraphics[width=\textwidth]{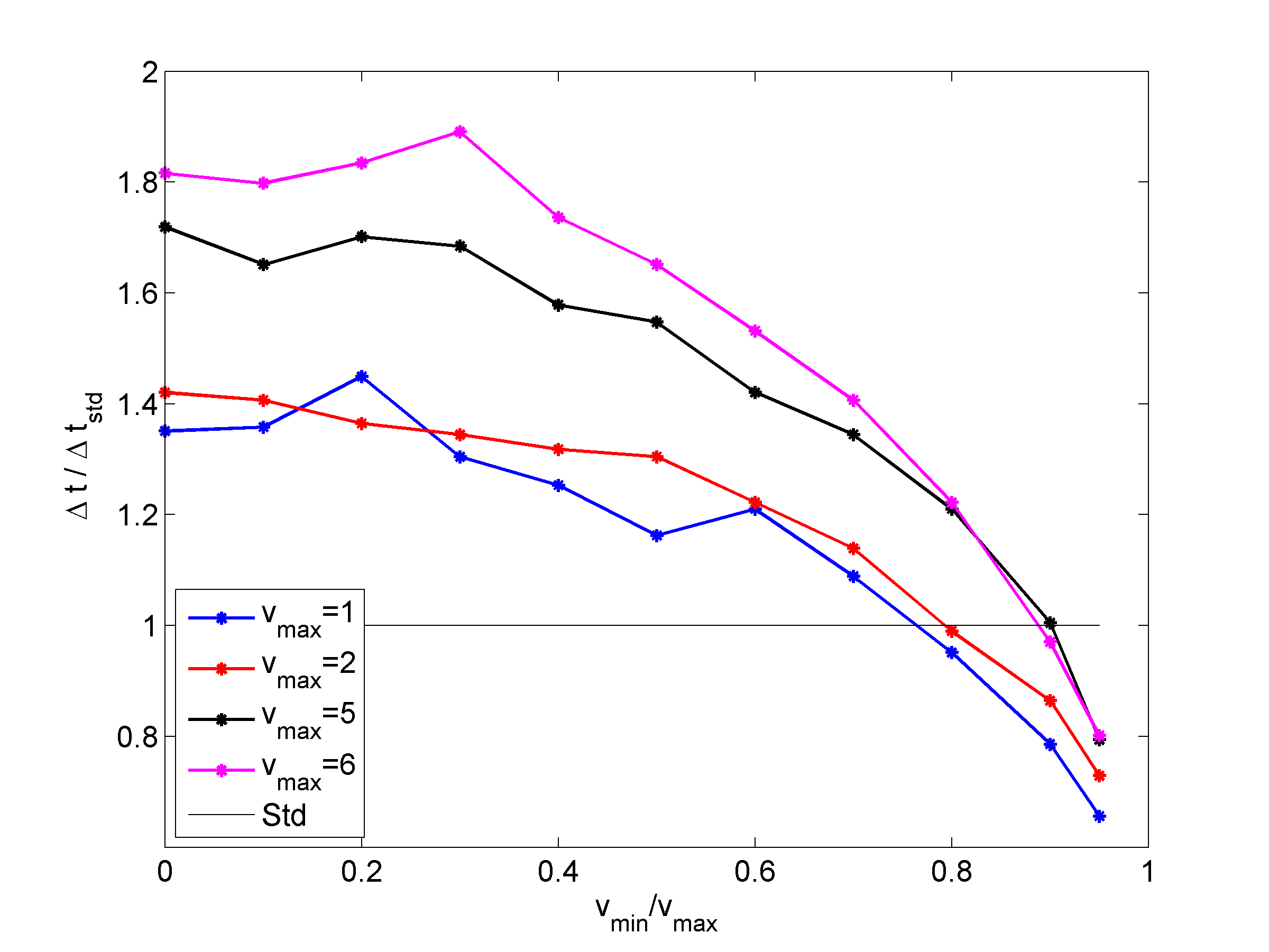}
      \caption{Rejection rate fixed at $0.001$}
      \label{fig:DtHam}
    \end{subfigure}
    \begin{subfigure}[b]{0.48\textwidth}
      \includegraphics[width=\textwidth]{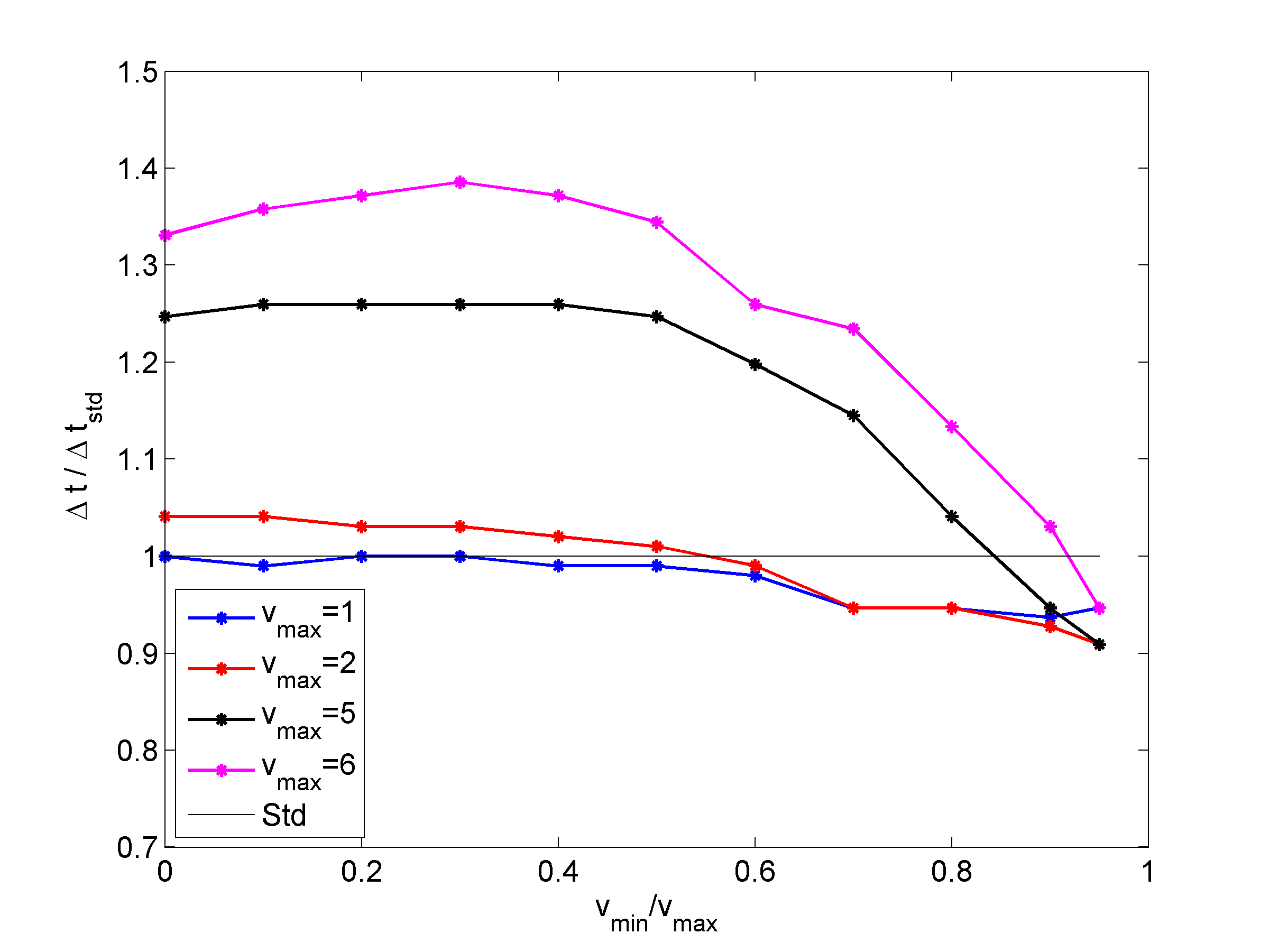}
      \caption{Rejection rate fixed at $0.5$}
      \label{fig:rejRate0d5}
    \end{subfigure}
    \caption{ Timesteps normalized by $\de t_{\rm std}$ (the time step corresponding to the same rejection rate for the standard kinetic energy) corresponding to a fixed rejection rate in the Hamiltonian part for various values of $\delta = \er/\ef$ and the kinetic energy~\eqref{def new arps kinetic energy}. }
      \label{fig:Dt}
\end{figure}


\subsection{Decreasing metastability with general kinetic energies}
\label{sec:num_non_glob_Lip}

In this section, we illustrate how the use of alternative kinetic energy functions can help to reduce metastability in the sampling of probability measures of Boltzmann--Gibbs type. This possibility was already explored to some extent in recent works, using in particular relativistic kinetic energies with heavy tails~\cite{lu2016relativistic,livingstone2017kinetic,PhD}. We first present some numerical evidence showing that the finite time weak error is small with the scheme we consider, as exemplified by the computation of some total rate of escape out of some metastable state (see Section~\ref{sec:weakErrorNumerics}). In a second step, we study the reduction of metastability incurred by appropriate choices of kinetic energies (see Section~\ref{section hitting times}).

\subsubsection{Improved weak order for approximation of dynamical properties}
\label{sec:weakErrorNumerics}

We illustrate the weak error estimate~\eqref{eq:finite_time_weak_error_GHMC} obtained as a corollary of Proposition~\ref{lemma evolution operator ghmc} in two steps. First, we consider a situation where we can analytically integrate the fluctuation-dissipation in order to confirm the fractional order~$3/2$ following from the estimates of Lemma~\ref{lem:weak_FD}. In a second step, we show that the use of the complete GHMC scheme allows to reduce the weak error compared to schemes using the standard MALA discretization of the fluctuation-dissipation~\cite{RDF78,RT96} (based on a proposal obtained by a Euler--Maruyama discretization).

\paragraph{Confirmation of the fractional order~$3/2$ for the fluctuation-dissipation.}
We consider the elementary fluctuation-dissipation dynamics~\eqref{eq:elementary_FD} for the standard kinetic energy $U(p)=\frac{p^2}{2m}$, in dimension $d=1$. We compute for instance the variance of $p_T$ starting from the initial momentum $p^0=0$, for a given time $T>0$. An analytical integration of the Ornstein--Uhlenbeck process gives
\[
\E \left[ p_T^2\right | p^0=0] = \beta^{-1} \left(1-{\rm e}^{-\frac{2 T}{m}}\right).
\]
We compare the discretization of~\eqref{eq:elementary_FD} by the HMC-like scheme~\eqref{eq:HMC_like_proposal_for_Metropolis} and by MALA. We set $T=1$, $m=1 $, $\beta=1$, and approximate the expectation by a sum over $10^8$ realizations. The timesteps are chosen in the range $[0.005, 0.02]$. The relative errors reported in Figure~\ref{fig:varOU} confirm the predicted order~$3/2$ for HMC, as well as the order~1 for MALA (obtained as a straightforward corollary of the results in~\cite{FHS14,fathi2015improving}). 

\begin{figure}[t]
  \centering
  \includegraphics[width=0.45\textwidth]{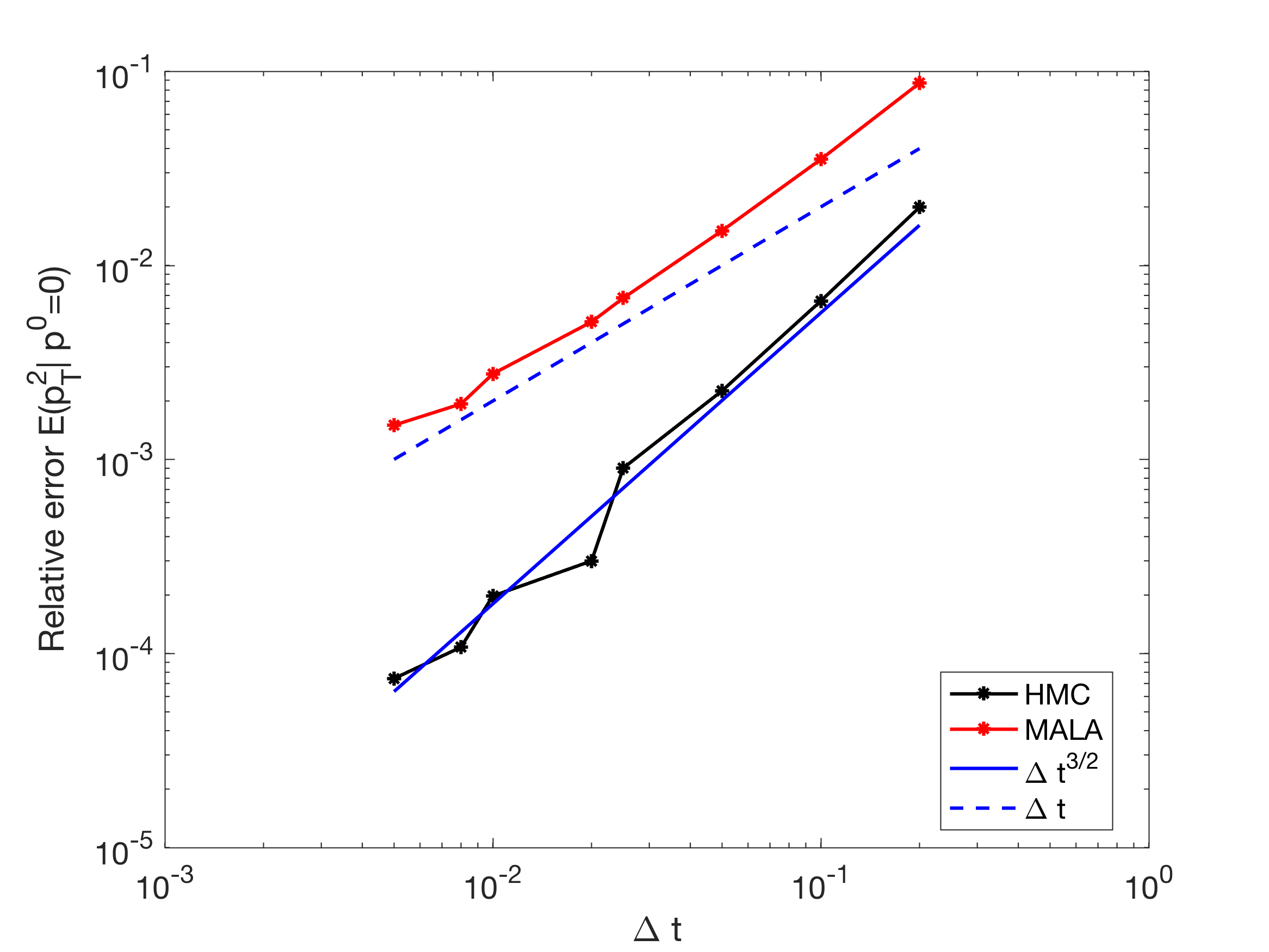}
  \caption{Relative error of the variance of momenta at a given time $T=1$ for the Ornstein--Uhlenbeck process.}
  \label{fig:varOU}
\end{figure}

\paragraph{Weak order for the full dynamics.}
We next consider the full Langevin dynamics in dimension $d=1$, with a potential energy given by a double-well potential: for $q\in \R$,
\[
V(q) = (q^2-1)^2.
\]
We consider two kinetic energies: the standard, quadratic one with $m=1$, and a generalized kinetic energy $U=V$. We apply the new scheme corresponding to the Strang splitting encoded by the evolution operator $P^{\rm GHMC}_{\de t}= P^{\rm FD}_{\de t/2}  P^{\rm Ham}_{\de t}  P^{\rm FD}_{\de t/2}$. In order to illustrate the improved weak error predicted by Proposition~\ref{lemma evolution operator ghmc}, we compute the probability that a trajectory starting from the initial condition $q_0=1$ is within the other metastable set, \emph{i.e.} behind the saddle point of the double-well, at a fixed time $T=2$. More precisely, we approximate $\E\left[\1_{A(q_T)} \left| q^0=q_0 \right.\right]$ with $A(q)=\{ q \in \mathbb{R}, \, q<0\}$ using $10^8$ independent realizations, for various time steps $\de t$. We compare two methods: the here proposed GHMC and GMALA, which is obtained by the same Strang splitting, but with a fluctuation-dissipation discretized by MALA. Note that the weak order of GMALA is~1 since the fluctuation-dissipation is discretized at order~1 only. From the numerical results reported in Figure~\ref{fig:1A} we observe that GHMC method is more accurate than GMALA, especially for kinetic energies different from the standard quadratic one. As expected, the finite time weak error is of order~$\dt$ for GMALA; for the GHMC scheme we consider the order is not completely clear, but the error is in any case of order~$\dt^{3/2}$ at most.  

Note also that, interestingly, the probability of being in the set $A$ at time $T=2$ is larger with the alternative kinetic energy $U=V$ than with the standard one ($0.22$ versus $0.12$), which suggests that Langevin dynamics with this choice of kinetic energy explores the phase space faster. We continue exploring this idea in Section~\ref{section hitting times} below.

\begin{figure}[t]
  \centering
  \includegraphics[width=0.45\textwidth]{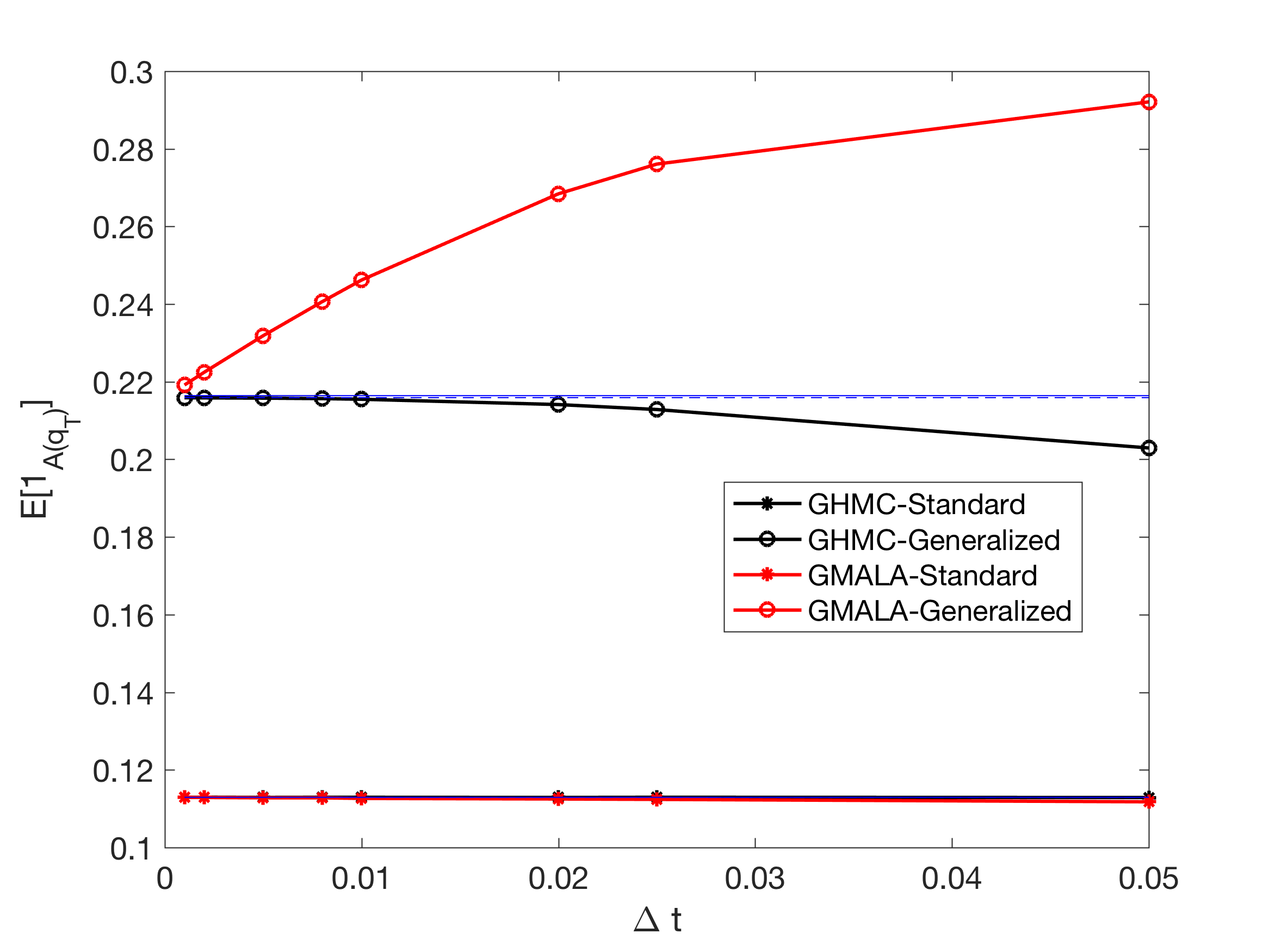}
  \caption{Probability of hitting the set $A$ at time $T=2$ for generalized and standard kinetic energy, as a function of the timestep $\Delta t$.}
  \label{fig:1A}
\end{figure}

\begin{figure}[t]
  \centering
  \includegraphics[width=0.45\textwidth]{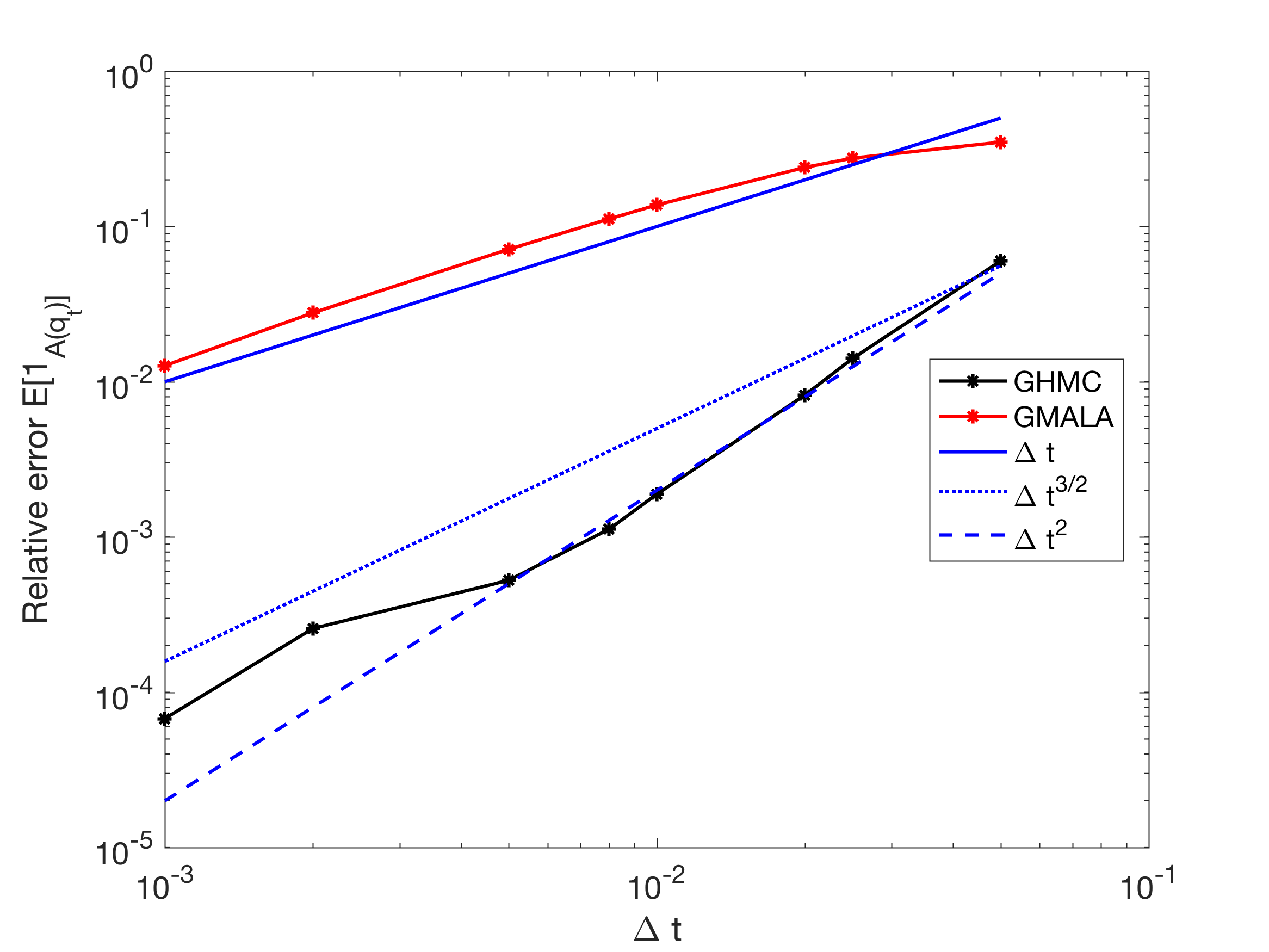}
  \caption{Relative error of the probability of hitting the set $A$ at time $T=2$ with respect to the interpolated values at $\de t=0$ for the generalized kinetic energy. As expected, the finite time weak error if of order $\de t$ for GMALA; for the GHMC scheme we consider, the order is not completely clear, but the error is in any case of order $\de t^{3/2}$ at most. }
  \label{fig:modif}
\end{figure}


\subsubsection{Expected hitting times}
\label{section hitting times}

Motivated by the numerical results reported in the previous section, which demonstrate an improved exploration of the phase space for the special case $U=V$, we look at other kinetic energies and their impact on the metastable features of the dynamics. We consider, as a measure of the metastability, the average hitting time of a metastable state starting from another metastable state (see below for more precise definitions). The numerical approximation of such quantities is not dictated by weak error estimates, but rather by strong error estimates. Our aim in this section is however rather to explore properties of the underlying continuous dynamics, so that we consider sufficiently small timesteps and neglect the impact of the discretization error. 

We study two dimensional systems (\textit{i.e} $q=\left(x,y\right)\in \R^2$) for a potential similar to the one considered in~\cite[Section~1.3.3.1]{lelievre2010free}: 
\begin{equation}
\label{eq:2D_DW}
V(x,y)=\frac{1}{6}\left(4\left(-x^2-y^2+w\right)^2+10\left(x^2-2\right)^2+\left(\left(x+y\right)^2-1\right)^2+\left(\left(x-y\right)^2-1\right)^2\right).
\end{equation}
This potential can be seen as some effective double well potential in the $x$ direction (see Figure~\ref{fig:potetnial_and_hitting_set} for contour plots). 
\begin{figure}[t]
  \centering
  \includegraphics[width=0.45\textwidth]{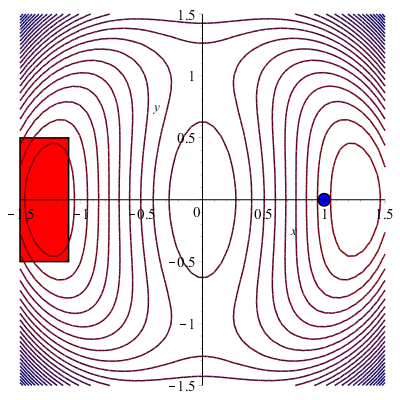}
  \caption{Two dimensional double-well potential~\eqref{eq:2D_DW}. To compute exit times out of metastable states, we consider the starting configuration $A:=(1,0)$ and the target set $B:=\left\{(x,y): x\leq -1 \text{ and } \left|y\right|\leq 0.5\right\}$.}
  \label{fig:potetnial_and_hitting_set}
\end{figure}
The metastability of Langevin dynamics is caused by some energetic barrier in this direction at $x=0$. In the following numerical experiments, we discretize the Langevin dynamics \eqref{modified Langevin} by the same scheme as in Section~\ref{section AR-Langevin dynamics}, with $\gamma=1$, $m=1$ and $\de t=0.001$. 

Various kinetic energies can be considered. We focus on the following ones: 
\begin{enumerate}[(1)]
\item the standard kinetic energy $U_1(x,y)=(x^2+y^2)/2$;
\item a fifth order polynomial in both directions $U_2(x,y)= \left( \left|x\right|^5+\left|y\right|^5\right)/5$, which provides an example of light-tailed distribution of momenta;
\item a heavy tailed function distribution of momenta, corresponding to the choice 
\[
U_3(x,y)= \frac{4}{5} \left[ \left|x\right|^{5/4} +\left|y\right|^{5/4} \right]; 
\]
\item the same function as the potential function $U_{4}\equiv V$;
\item a double-well function in the $x$-direction and a quadratic function in the $y-$direction:  
\[
U_5(x,y) = V_{\rm DW}(x) + \frac{y^2}{2},
\qquad
V_{\rm DW}(x)=\left( \left| x-1 \right|^{-2} + \left| x+1 \right| ^{-2} \right) ^{-1}.
\]
This function somewhat approximates $V$, so we expect the distribution of momenta under the canonical measure associated with $U_5$ to be close to the one associated with $U_4$. 
\end{enumerate}
Figure~\ref{fig:animals} presents two realizations of the Langevin dynamics~\eqref{modified Langevin} for a physical time $T=1000$ and an inverse temperature $\beta = 1$, for the choices $U_1$ and $U_4$ above. Note that, for the standard kinetic energy~$U_1$, there is only one crossing from one well to the other during the simulation time. On the other hand, there are many more crossings for $U_{4}$. 
\begin{figure}
    \centering
    \begin{subfigure}[b]{0.45\textwidth}
      \includegraphics[width=\textwidth]{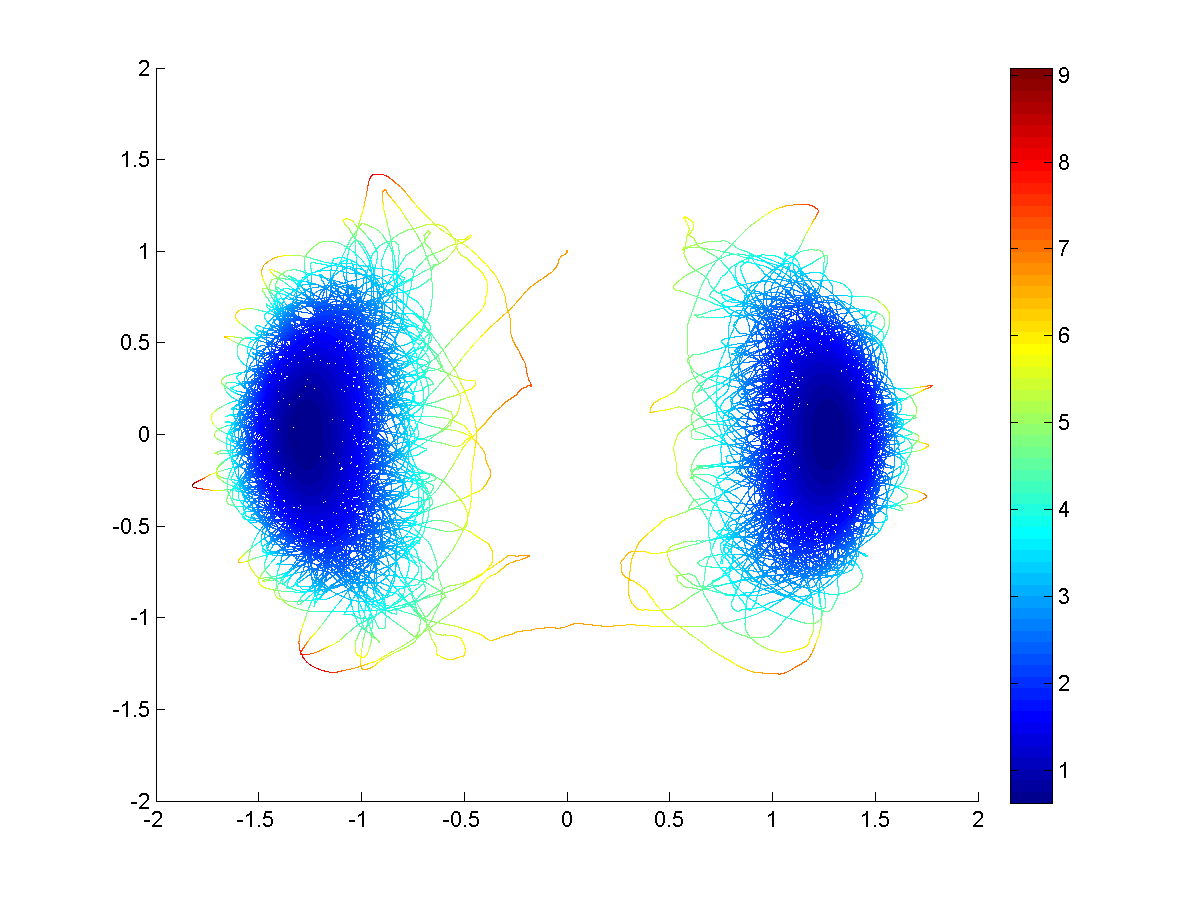}
      \caption{Standard kinetic energy function. \vspace{12pt}}
    \end{subfigure}
    \quad 
    \begin{subfigure}[b]{0.45\textwidth}
      \includegraphics[width=\textwidth]{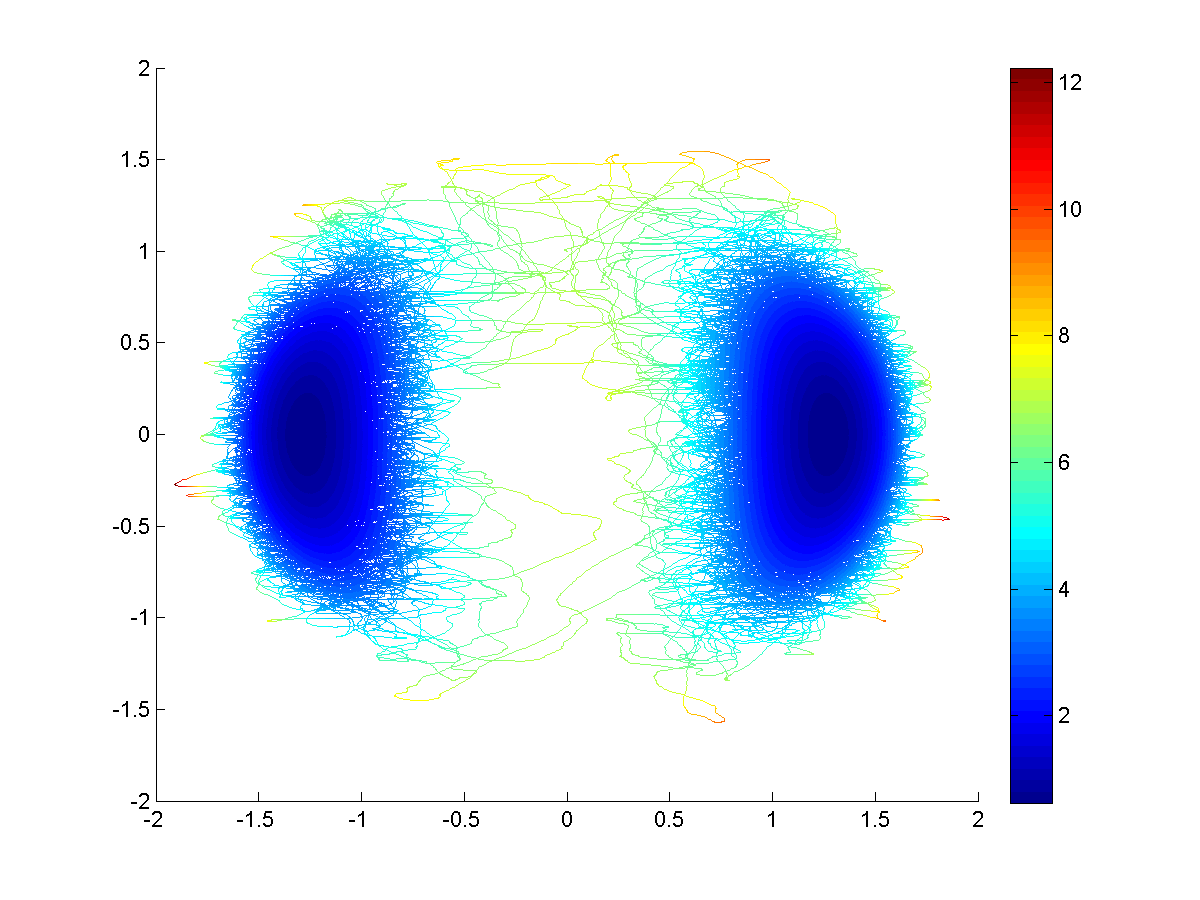}
      \caption{Same kinetic energy function as the potential energy function, \textit{i.e.} $U\equiv V$.}
    \end{subfigure}
    \caption{Positions as a function of time for the modified Langevin dynamics with the two-dimensional double well potential~\eqref{eq:2D_DW}, and two different kinetic energy functions. The simulation time is $T=1000$, 
		and the same realization of the Brownian motion is used in both cases. For the same number of simulation steps, there are more crossings between the wells for the dynamics with the modified kinetic energy (Right) than for the standard one (Left). The coloring corresponds to the values of the potential energy.}
    \label{fig:animals}
\end{figure}

In order to quantify the reduction of the metastability gained by modifying the kinetic energy function, we numerically estimate the expected time to reach a set $B$ starting from a set~$A$, the two sets being separated by the energetic barrier. We start in fact from a given initial condition, which corresponds to the initial set $A:=\{(1,0)\}$. We then compute the number of simulation steps necessary to reach the set $B:=\left\{(x,y): x\leq -1 \text{ and } \left|y\right|\leq 0.5\right\}$ (see Figure~\ref{fig:potetnial_and_hitting_set} for an illustration). The expected hitting time is estimated by an average over $1000$ independent realizations of the exit process. We report in Table~\ref{table hitting times} the average physical time needed to reach the set $B$ for each choice of the kinetic energy function, as well as the speed-up relative to the results obtained with the standard kinetic energy. 
\begin{table*}[h]
  \centering
  \begin{tabular}{|c||c|c |c|c|c|}
    \hline
    Kinetic energy &$U_1=U_{\rm std}$ &   $U_2$  &  $U_3$  &  $U_4$  &  $U_5$ \\
    \hline
    $T_{\rm hit}$&$297.2 \left[\pm 9.5\right]$ &    $259.2  \left[\pm 7.8\right]  $ &$ 307.0\left[\pm 9.6 \right]$ & $ 101.7 \left[\pm 3.2 \right]  $ & $ 203.4\left[\pm 6.3 \right] $  \\
    \hline
    Speed up $T_{\rm hit}/T_{\rm std}$  &  $1$ &   $  1.155   $ & $ 0.97  $ &$ 2.92  $& $1.46$     \\
    \hline
  \end{tabular}
  \caption{Expected hitting times according to the choice of the kinetic energy functions $U_i$ (see text) at $\beta=1$. Errors bars determined by 95\% confidence intervals are reported in brackets.} 
  \label{table hitting times}
\end{table*}
Intuitively, heavy tailed distributions of momenta (corresponding to~$U_3$ here) could be thought of as being interesting since they allow for larger velocities, which may facilitate the transition from one well to the other. This is however not the case. On the other hand, we observe that the double-well-like functions ($U_4$ and $U_5$) are most helpful to reduce the metastability of the dynamics and allow for more transitions from the region around $x=-1$ to the region around $x=1$. Note that the hitting time is almost three times smaller with $U_4$. 

We next study the scaling of the average time needed to reach the set $B$ as a function of the inverse temperature~$\beta$, for the standard kinetic energy and the one which performed best at $\beta=1$, namely~$U_4 = V$; see Figure~\ref{fig:exitTimesbeta}. We observe an exponential growth of the hitting time with respect to~$\beta$, which is characteristic for metastability caused by energetic barriers in the low temperature limit by the Eyring-Kramers law (see for instance the presentation and the references in~\cite{berglund-13,actaLelievre2016}). We fit the hitting times as
\[
T_{\rm hit}(\beta) = C \mathrm{e}^{\beta E},
\]
for some energy~$E > 0$. For the results presented in Figure~\ref{fig:exitTimesbeta}, $E$ is the same for both kinetic energies, but the prefactor~$C$ differs. It is in fact smaller for the modified kinetic energy~$U_4$ than for the standard kinetic energy~$U_1$.

\begin{figure}[t]
  \centering
  \includegraphics[width=0.60\textwidth]{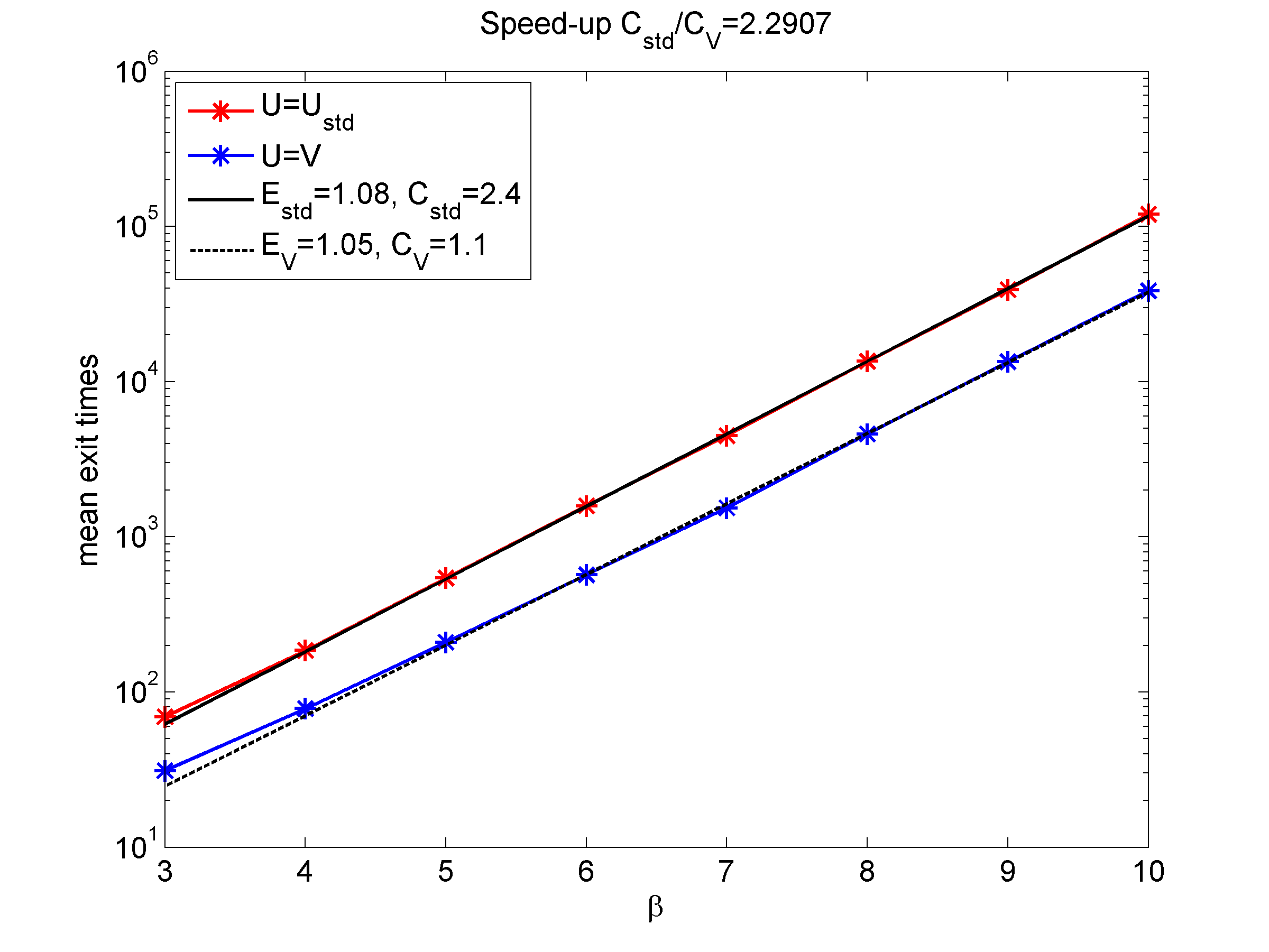}
  \caption{Mean exit times over $2000$ realizations as a function of $\beta\in \{3,4,5,6,7,8,9,10\}$.}
  \label{fig:exitTimesbeta}
\end{figure}

The excellent reduction in metastability we obtain on this simple low-dimensional system motivates us to test the relevance of this approch for higher dimensional systems. One track is to modify the kinetic energy on the velocity of some reaction coordinate summarizing slow degrees of freedom, keeping the standard kinetic energy for faster degrees of freedom; see~\cite{PhD} for preliminary steps in this direction.

\appendix
\section{Technical results used in the proof of Theorem~\ref{theorem DMS}}
\label{sec:technical_results}

Let us first gather some properties of the operator~$A$, directly deduced from~\cite[Lemma~1]{DMS15}. The proof is obtained by a direct adaption of the proof of~\cite[Lemma~1]{RS17}.

\begin{lemma}
\label{lem:A}
It holds $\Pi A = A$. Moreover, for any function $g \in L^2(\mu)$,
\[
\| A g \|_{L^2(\mu)} \leq \frac12 \| (1-\Pi)g \|_{L^2(\mu)}, 
\qquad 
\| \cLham A g \|_{L^2(\mu)} \leq \| (1-\Pi)g \|_{L^2(\mu)}, 
\]
\end{lemma}

Let us now turn to the proof of Proposition~\ref{prop:coercivity_scrD}, written for real-valued functions. Its proof is very similar to the proof of~\cite[Proposition~1]{RS17}, with a few modifications except for the estimate given in Lemma~\ref{lem:ALFD} below which requires a more involved treatment. First, note that 
\[
\begin{aligned}
\langle\langle \cL^* g,g \rangle\rangle & = \langle\langle g, \cL^* g \rangle\rangle = \langle g , \cL g \rangle_{L^2(\mu)} + \eps \langle A\cL^* g , g \rangle_{L^2(\mu)} + \eps \langle \cL A g , g\rangle_{L^2(\mu)} \\
& = \gamma \langle g , \cLFD g \rangle_{L^2(\mu)} - \eps \langle A \cLham g , g \rangle_{L^2(\mu)} + \gamma \eps \langle A \cLFD g , g \rangle_{L^2(\mu)} + \eps \langle \cLham A g , g \rangle_{L^2(\mu)} ,
\end{aligned}
\]
where we used in the last line that $\cLFD A = \cLFD \Pi A = 0$. Since $\cLFD = -\beta^{-1} \nabla_p^* \nabla_p$ (with $\nabla_p^* = -\nabla_p + \beta \nabla U^T$) and using Lemma~\ref{lem:A},
\begin{equation}
\label{eq:derivee_sH}
\begin{aligned}
  \langle\langle \cL^* g,g \rangle\rangle & \leq -\frac{\gamma}{\beta} \left\|\nabla_p g \right\|_{L^2(\mu)}^2 - \eps \langle A \cLham g , g \rangle_{L^2(\mu)} + \gamma \eps \langle A \cLFD g , g \rangle_{L^2(\mu)}  \\
  & \ \ + \eps \| (1-\Pi)g\|_{L^2(\mu)}^2.
\end{aligned}
\end{equation}
The first term on the right-hand side can be bounded using the Poincar\'e inequality on~$\kappa$: 
\[
-\frac1\beta \| \nabla_p g \|_{L^2(\mu)}^2 \leq -\frac{K_\kappa^2}{\beta} \| (1-\Pi)g \|_{L^2(\mu)}^2.
\]
The third term on the first line of the right-hand side is bounded using Lemma~\ref{lem:ALFD}. We next decompose the second term in the first line of the right-hand side of~\eqref{eq:derivee_sH} as 
\begin{equation}
\label{eq:ALham}
\langle A \cLham g , g \rangle_{L^2(\mu)} = \langle A \cLham \Pi g , g \rangle_{L^2(\mu)} + \langle A \cLham (1-\Pi) g , g \rangle_{L^2(\mu)}.
\end{equation}
We start with the first term on the right-hand side of the above equality. Denoting by $B = \cLham \Pi$, it holds $(B h)(q,p) = \nabla U(p)^T \nabla_q (\Pi h)(q)$. When $h \in L^2_0(\mu)$, the Poincar\'e inequality~\eqref{eq:Poincare} therefore leads to
\[
\begin{aligned}
\| Bh \|^2 & = \int_{\mathcal{E}} \left(\sum_{i=1}^d \partial_{p_i} U \partial_{q_i}(\Pi h) \right)^2 d\mu \\
& = \frac1\beta \int_{\mathcal{D}^d} \nabla_q (\Pi h)^T \mathcal{M} \nabla_q(\Pi h) \, d\nu \\
& \geq \alpha \left\| \nabla_q (\Pi h) \right\|_{L^2(\nu)}^2 \geq \alpha K_\nu^2 \| \Pi h \|_{L^2(\nu)}^2 = \alpha K_\nu^2 \| \Pi h \|_{L^2(\mu)}^2,
\end{aligned}
\]
for some $\alpha > 0$, since the matrix 
\[
\cM = \int_{\R^d} \nabla^2 U \, d\kappa = \beta \int_{\R^d} \nabla U \otimes \nabla U \, d\kappa
\]
is positive, in view of the second equality, and in fact definite positive since $\nabla U \neq 0$ (otherwise $\rme^{-\beta U}$ would not be integrable). 
This can be rephrased as
\[
B^* B \geq \alpha K^2_\nu  \Pi \geq 0
\]
in the sense of symmetric operators. Since $A\cLham \Pi = (1+B^*B)^{-1} B^* B$, we can conclude that
\[
- \langle A \cLham \Pi g , g \rangle_{L^2(\mu)} \leq -\frac{\alpha K_\nu^2 }{1 + \alpha K_\nu^2 } \| \Pi g \|_{L^2(\mu)}^2.
\]
For the second term on the right-hand side of~\eqref{eq:ALham}, we write (using $\Pi A = A$)
\[
\langle A \cLham (1-\Pi) g , g \rangle_{L^2(\mu)} = -\langle (1-\Pi)g , \cLham A^* \Pi g \rangle_{L^2(\mu)}. 
\]
By Lemma~\ref{lem:Lham_A*} below, the operator $\cLham A^*$ is bounded, so that the absolute value of the right-hand side of the above equality is bounded by $\| \cLham A^* \| \| (1-\Pi)g\|_{L^2(\mu)} \| \Pi g\|_{L^2(\mu)}$.

Gathering all estimates, we obtain 
\[
\langle\langle \cL^* g,g \rangle\rangle \leq -G^T S G, 
\]
with
\[
G = \begin{pmatrix} \| \Pi g\|_{L^2(\mu)} \\ \| (1-\Pi) g\|_{L^2(\mu)} \end{pmatrix}, 
\qquad 
S = \begin{pmatrix} a & b/2 \\ b/2 & c \end{pmatrix},
\]
where
\[
a = \eps \frac{\alpha K_\nu^2 }{1 + \alpha K_\nu^2 },
\qquad 
b = - \eps \left( \| \cLham A^* \| + \gamma \| A \cLFD \| \right),
\qquad 
c = \frac{\gamma K^2_\kappa}{\beta} - \eps.
\]
Proposition~\ref{prop:coercivity_scrD} follows provided the smallest eigenvalue of~$S$, namely
\begin{equation}
\label{eq:lambda}
\lambda_\eps(S) = \frac{a + c}{2} - \frac12 \sqrt{(a-c)^2 + b^2},
\end{equation}
is positive. A simple argument shows that this holds true when $\eps$ is of the order of $\min(\gamma,1/\gamma)$, in which case $\lambda_\eps(S)$ is also of the same order of magnitude.

It remains to prove the following lemmas.

\begin{lemma}
  \label{lem:Lham_A*}
  The operator $\cLham A^* \Pi = \cLham^2 \Pi (1-\Pi \cLham^2 \Pi)^{-1}$ is bounded.
\end{lemma}

\begin{proof}
The action of $\cLham^2 \Pi$ is 
\[
\cLham^2 \Pi g = \nabla U^T (\nabla_q^2 \Pi g) \nabla U - \nabla V^T (\nabla^2 U) \nabla_q \Pi g.
\]
Since
\[
\int_{\R^d} \partial^2_{p_i,p_j}U \, d\kappa = \beta \int_{\R^d} (\partial_{p_i} U) (\partial_{p_j}U) \, d\kappa,
\]
a simple computation shows that $\Pi \cLham^2 \Pi$ is the generator of an overdamped Langevin process:
\begin{equation}
  \label{eq:ovd_M}
  \Pi \cLham^2 \Pi g = \cLovd^\mathcal{M} \Pi g, \qquad \cLovd^\mathcal{M} = - \nabla V^T \cM \nabla_q + \frac1\beta \cM : \nabla^2_q g = -\frac{1}{\beta} \nabla_q^* \cM \nabla_q,
\end{equation}
where $A:B = \mathrm{Tr}(A^TB)$ is the contraction of two square matrices.
The action of $\cLham A^* \Pi$ is therefore
\[
\cLham A^* \Pi \psi(q,p) = \nabla U(p)^T (\nabla_q^2 \Pi \psi)(q) \nabla U(p) - \nabla V(q)^T (\nabla^2 U)(p) \nabla_q \Pi \psi(q),
\]
with
\[
\psi = \left(1 - \cLovd^\cM\right)^{-1} \Pi g.
\]
The result then easily follows from Lemma~\ref{lem:estimates_spatial_operators} below and the fact that the matrices $\nabla U \otimes \nabla U$ and $\nabla^2 U$ have all their entries in~$L^2(\kappa)$.
\end{proof}

\begin{lemma}
\label{lem:ALFD}
The operator $A \cLFD$ is bounded and
\[
\left| \langle A \cLFD g , g \rangle_{L^2(\mu)} \right| \leq \| A \cLFD \| \|(1-\Pi)g\|_{L^2(\mu)} \|\Pi g\|_{L^2(\mu)}.
\]
\end{lemma}

\begin{proof}
We start by computing the action of $A \cLFD$. Note that $A\cLFD = -(1-\Pi \cLham^2\Pi)^{-1} \Pi \cLham \cLFD = -(1-\Pi \cLham^2\Pi)^{-1} \Pi [\cLham,\cLFD]$ since $\Pi \cLFD = 0$. In order to evaluate the commutator, we compute
\[
-\partial_{q_i}V \partial_{p_i} \cLFD g + \cLFD \left( \partial_{q_i}V \partial_{p_i}g \right) = \partial_{q_i}V \nabla_p \left(\partial_{p_i}U\right)^T \nabla_p g,
\]
and
\[
\begin{aligned}
\partial_{p_i}U \, \partial_{q_i} \cLFD g - \cLFD \left(\partial_{p_i}U \, \partial_{q_i} g \right) & = 
\nabla U^T \nabla_p \left(\partial_{p_i}U\right) \, \partial_{q_i} g \\
& \ \ - \frac{2}{\beta} \nabla_p \left(\partial_{p_i}U\right)^T \nabla_p \partial_{q_i} g - \frac1\beta \partial_{p_i} \left( \Delta U\right) \partial_{q_i} g.
\end{aligned}
\] 
Therefore,
\[
[\cLham,\cLFD]g = \nabla V^T \left(\nabla^2 U\right) \nabla_p g + \nabla U^T \left(\nabla^2 U\right) \nabla_q g - \frac2\beta \nabla^2 U : \nabla^2_{q,p} g - \frac1\beta \nabla (\Delta U)^T \nabla_q g.
\]
We next apply $\Pi$ to the various terms. Since
\[
\begin{aligned}
-\int_{\R^d} \nabla^2 U : \nabla^2_{q,p} g \, d\kappa & = \int_{\R^d} (\nabla_{q} g)^T (\nabla^2 U) \nabla_p \kappa + \int_{\R^d} (\nabla_{q} g)^T \nabla (\Delta U)  \, d\kappa \\
& = -\beta \int_{\R^d} (\nabla_{q} g)^T (\nabla^2 U) \nabla U \, d\kappa + \int_{\R^d} (\nabla_{q} g)^T \nabla (\Delta U)  \, d\kappa,
\end{aligned}
\]
we obtain
\begin{equation}
  \label{eq:Pi_commutator}
  \Pi[\cLham,\cLFD]g = \Pi \left[ \nabla V^T \left(\nabla^2 U\right) \nabla_p g - \left( \left(\nabla^2 U\right)\nabla U - \frac1\beta \nabla (\Delta U)\right)^T \nabla_q g\right].
\end{equation}
Therefore, $T = A\cLFD = (1-\cLovd^\mathcal{M})^{-1}\Pi\mathcal{A}$ with 
\[
\mathcal{A} g = \mathcal{A}_1 g - \mathcal{A}_2 g, 
\qquad 
\mathcal{A}_1 g = \nabla V^T \left(\nabla^2 U\right) \nabla_p g,
\qquad 
\mathcal{A}_2 g = \left( \left(\nabla^2 U\right)\nabla U - \frac1\beta \nabla (\Delta U)\right)^T \nabla_q g. 
\]
Both operators $T_i = (1-\cLovd^\mathcal{M})^{-1}\Pi\mathcal{A}_i$ for $i \in \{1,2\}$ are the composition of two bounded operators, one acting on the position variables only and the other one acting on the momentum variables only; see Lemmas~\ref{lem:estimates_spatial_operators} and~\ref{lem:Pi_U_dp} below. Therefore, $T=T_1+T_2$ is bounded on~$L^2(\mu)$.

To conclude the proof, we note that $\langle A \cLFD g , g \rangle_{L^2(\mu)} = \langle \Pi A \cLFD (1-\Pi) g , g \rangle_{L^2(\mu)} = \langle A \cLFD (1-\Pi) g , \Pi g \rangle_{L^2(\mu)}$. The desired bound then follows from a Cauchy--Schwarz inequality.
\end{proof}

\begin{lemma}
  \label{lem:estimates_spatial_operators}
  Assume that~\eqref{eq:regularization condition} holds. Then, the operators $\partial^2_{q_i,q_j} \left(1 - \cLovd^\cM\right)^{-1}$, $(\partial_{q_i} V) \partial_{q_j} (1-\cLovd^\mathcal{M})^{-1}$ (for any $i,j \in \{1,\dots,d\}$), $(1-\cLovd^\mathcal{M})^{-1} \nabla_q$ and $(1-\cLovd^\mathcal{M})^{-1} |\nabla V|$ are bounded on~$L^2(\nu)$.
\end{lemma}

\begin{proof}
  The condition~\eqref{eq:regularization condition} ensures that the operator $(1+\nabla_q^*\nabla_q)^{-1}$ is bounded from $L^2(\nu)$ to $H^2(\nu)$ (see~\cite{DMS15}). Since $\cM$ is positive definite, there exists $K \geq 1$ such that (in the sense of positive self-adjoint operators)
\[
\frac1K \nabla_q^* \nabla_q \leq -\cLovd^\cM \leq K \nabla_q^* \nabla_q,
\]
and so
\begin{equation}
  \label{eq:equivalence_L_ovd_M}
  \frac1K \left(1 + \nabla_q^* \nabla_q\right)^{-1} \leq \left(1 - \cLovd^\cM\right)^{-1} \leq K \left(1 + \nabla_q^* \nabla_q \right)^{-1}.
\end{equation}
Therefore, the operator $\left(1 - \cLovd^\cM\right)^{-1}$ is also bounded from $L^2(\nu)$ to $H^2(\nu)$. This already shows that $\partial^2_{q_i,q_j} \left(1 - \cLovd^\cM\right)^{-1}$ is bounded on~$L^2(\nu)$ for any $i,j \in \{1,\dots,d\}$.

We next use~\cite[Lemma~A.24]{Villani}: there exists $C>0$ such that, for a function $h : \mathcal{D} \to \R$,
\begin{equation}
\label{eq:control_nabla_V}
\| |\nabla V| h\|^2_{L^2(\nu)} \leq C \left( \|h\|_{L^2(\nu)}^2 + \|\nabla_q h\|_{L^2(\nu)}^2 \right).
\end{equation}
This immediately shows that $(\partial_{q_i} V) \partial_{q_j} (1-\cLovd^\mathcal{M})^{-1}$ is bounded and
\[
\left\| (\partial_{q_i} V) \partial_{q_j} (1-\cLovd^\mathcal{M})^{-1} \right\| \leq C\left(\left\| \partial_{q_j} (1-\cLovd^\mathcal{M})^{-1} \right\| + \left\| \nabla_q \partial_{q_j} (1-\cLovd^\mathcal{M})^{-1} \right\|\right),
\]
the two operators on the right-hand side being bounded since $\left(1 - \cLovd^\cM\right)^{-1}$ is bounded from $L^2(\nu)$ to $H^2(\nu)$. 

Moreover, \eqref{eq:control_nabla_V} shows that the operator $|\nabla V| (1+\nabla_q^*\nabla_q)^{-1}$ is bounded on $L^2(\nu)$, with $\| |\nabla V| (1+\nabla_q^*\nabla_q)^{-1}\|^2 \leq C$. The same conclusion holds for its adjoint $(1+\nabla_q^*\nabla_q)^{-1}|\nabla V|$. We can finally conclude that $(1-\cLovd^\mathcal{M})^{-1} |\nabla V|$ is bounded on $L^2(\mu)$ in view of~\eqref{eq:equivalence_L_ovd_M}.

Finally, using the above arguments, the operator $\nabla_q^* (1-\cLovd^\mathcal{M})^{-1} = (-\nabla_q + \beta \nabla V)(1-\cLovd^\mathcal{M})^{-1}$ is bounded on~$L^2(\nu)$, and so is its adjoint $(1-\cLovd^\mathcal{M})^{-1} \nabla_q$.
\end{proof}

\begin{lemma}
\label{lem:Pi_U_dp}
The operators $\Pi (\partial^\alpha U) \partial_p^{\alpha'}$ are bounded on $L^2(\mu)$ for any $\alpha,\alpha' \in \mathbb{N}^d$ with $|\alpha'| \leq 1$, and
\[
\left\| \Pi (\partial^\alpha U)  \right\| \leq \left\| \partial^\alpha U \right\|_{L^2(\kappa)},
\qquad 
\left\| \Pi (\partial^\alpha U) \partial_p^{\alpha'} \right\| \leq \left\| \partial^{\alpha+\alpha'} U \right\|_{L^2(\kappa)} + \beta \left\| (\partial^{\alpha} U)(\partial^{\alpha'} U) \right\|_{L^2(\kappa)}.
\]
\end{lemma}

\begin{proof}
Let us start with the case $\alpha' = 0$. For $g \in L^2(\mu)$,
\[
\left(\Pi (\partial^\alpha U) g\right)(q) = \int_{\R^d} \partial^\alpha U(p) g(q,p) \, \kappa(dp),
\]
so that, by a Cauchy--Schwarz inequality with respect to the measure~$\kappa$ and a subsequent integration with respect to~$\nu$,
\[
\left\| \Pi (\partial^\alpha U) g \right\|_{L^2(\mu)} \leq \left\| \partial^\alpha U \right\|_{L^2(\kappa)} \|g\|_{L^2(\mu)}.
\]
For the case $|\alpha'| = 1$, we note that
\[
\begin{aligned}
\left(\Pi (\partial^\alpha U) \partial_p^{\alpha'}g\right)(q) & = \int_{\R^d} \partial^\alpha U(p) \partial_p^{\alpha'}g(q,p) \, \kappa(dp) \\
& = -\int_{\R^d} g(q,p) \partial^{\alpha+\alpha'} U(p) \, \kappa(dp) + \beta \int_{\R^d} g(q,p) \partial^{\alpha} U(p) \partial^{\alpha'} U(p) \, \kappa(dp),
\end{aligned}
\]
so that, again by a Cauchy--Schwarz inequality,
\[
\left\| \Pi (\partial^\alpha U) \partial_p^{\alpha'}g \right\|_{L^2(\mu)} \leq \left( \left\| \partial^{\alpha+\alpha'} U \right\|_{L^2(\kappa)} + \beta \left\| (\partial^{\alpha} U)(\partial^{\alpha'} U) \right\|_{L^2(\kappa)}\right) \|g\|_{L^2(\mu)},
\]
which gives the desired conclusion.
\end{proof}

We conclude this appendix with a discussion on the dependence of the convergence rate on the kinetic energy~$U$.

\begin{remark}
\label{rmk:dep_U}
A lower bound on the convergence rate is given by~\eqref{eq:lambda}. There are various places where the kinetic energy~$U$ enters: in a rather explicit way in the coefficients $a$ and~$c$ through the Poincar\'e constant $K_\kappa$ and the term $\|\nabla U\|_{L^2(\mu)}$; but also in a quite cumbersome manner in the operator norms $\| A \mathcal{L}_{\rm FD}\|$ and $\|A^* \mathcal{L}_{\rm ham}\|$, see the proofs of Lemmas~\ref{lem:Lham_A*} and~\ref{lem:ALFD}. It is therefore difficult with our proof to quantify precisely how the lower bound~\eqref{eq:lambda} depends on~$U$.
\end{remark}

\subsection*{Acknowledgments}

We would like to thank to Sam Livingstone and Nawaf Bou-Rabee for fruitful discussions. The work of Gabriel Stoltz was funded by the Agence Nationale de la Recherche, under grant ANR-14-CE23-0012 (COSMOS). He also benefited from the scientific environment of the Laboratoire International Associ\'e between the Centre National de la Recherche Scientifique and the University of Illinois at Urbana-Champaign. Zofia Trstanova gratefully acknowledges funding from the European Research Council through the ERC StartingGrant No. 307629 and EPSRC grant EP/P006175/1. 

\bibliographystyle{plain}


\end{document}